\documentclass[11pt]{article}

\usepackage[utf8]{inputenc}
\usepackage[T1]{fontenc}
\usepackage[english]{babel}

\usepackage{amsmath, amssymb, amsthm}
\usepackage{mathrsfs}
\usepackage{stmaryrd}
\usepackage{verbatim}
\usepackage{enumerate}

\usepackage{geometry}
\usepackage{hyperref}
\hypersetup{
  colorlinks=true,
  linkcolor=blue,
  citecolor=blue,
  urlcolor=black
}

\usepackage{graphicx}
\usepackage{enumitem}
\usepackage{booktabs}
\usepackage{xcolor}

\usepackage{cite}

\usepackage{authblk}
\newtheorem{theorem}{Theorem}[section]

\theoremstyle{remark}

\theoremstyle{definition}
\newtheorem{definition}[theorem]{Definition}

\usepackage{fancyvrb}
\VerbatimFootnotes
\fvset{listparameters={\setlength{\topsep}{1.5pt}\setlength{\partopsep}{0pt}}}

\usepackage{float}
\usepackage{caption}

\newfloat{listing}{htbp}{lol}
\floatname{listing}{Listing}
\title{Provenance-Enhanced Statements in Knowledge Graphs}

\author{Fabio Vitali\thanks{fabio.vitali@unibo.it}}
\affil{University of Bologna, Italy}
\author{Valentina Pasqual}
\affil{University of Bologna, Italy}
\date{June 2026}

\begin{document}
\maketitle

\begin{abstract}
Provenance-enhanced statements of the form "according to $X$, $\varphi$" are pervasive in contemporary knowledge graphs, especially in domains where graph content primarily represents claims, interpretations, and hypotheses (\emph{capta}) rather than observer-independent facts (\emph{data}). Current provenance models can record who asserted what, but they typically treat provenance as semantically neutral, leaving underspecified how attributed claims relate to factual commitment, to one another, and to reasoning.

In this paper we introduce DEC, a framework that interprets provenance predicates as indicators of epistemic stance and groups provenance-homogeneous sets of statements into \emph{cognitive worlds}. Drawing on cognitive modal logics (doxastic, epistemic, and conjectural), DEC characterizes locality, rationality, and controlled permeation between cognitive worlds and a distinguished factual core (“reality”), thereby enabling principled reasoning over attributed content without collapsing disagreements into inconsistencies.

We formalize a DEC interpretation for RDF datasets that is conservative over RDF~1.2 semantics, clarify the role of intensionality and identity (including the Superman paradox), and illustrate the approach on common Semantic Web representations (named graphs, quoted triples/RDF-star, and reification)
. Finally, we describe our prototype DEC reasoner implemented as a Fuseki dataset module, supporting controlled factualisation and explicit detection of disagreements and delusions.
\end{abstract}

\begin{flushleft}
\textbf{Keywords:} Modal logic, epistemic logic, doxastic logic, weak Kleene logic, description logic, modal collapse
\end{flushleft}

\section{Introduction}\label{sec:introduction}
In this paper we address the problem of provenance-enhanced statements (PES) and of their semantic characterization beyond what is currently achievable in mainstream Semantic Web technologies. Specifically, we aim to study sentences such as "According to Martin Kemp, the painting \textit{Salvator Mundi} was painted by Leonardo da Vinci" and discuss characteristics and differences in the factuality of sentences (their \textit{epistemic status}) by expanding the neutral "According to Martin Kemp..." into different types of provenance predicates (e.g., "Martin Kemp wrote...", "Martin Kemp believed...", "Martin Kemp discovered..." or "Martin Kemp supposed..."). 

Provenance-enhanced statements constitute a substantial portion of the content of large and widely used knowledge graphs. They capture information for which provenance does not merely provide contextual metadata, but plays a decisive role in justifying the very presence of a statement in the graph. This distinction is closely related to a classical opposition, that between \textit{data} and \textit{capta} \cite{Drucker11}. Data denote states of affairs that are taken to hold independently of any particular observer or interpretation. \textit{Data} are objective, stable and independent of personal interpretation: for instance, the physical dimensions of a painting or its material. \textit{Capta} on the contrary denote what is apprehended, recorded, or asserted by cognitive agents expressing their understanding about something: for instance, the authorship of a painting, its creation date, its style, etc. 

 For data, the identification of provenance is largely irrelevant: the objective nature of the fact makes its source essentially informative. For capta, instead, provenance is culturally significant: the identification of who made a statement, when, and under which authority is essential to assess its relevance, reliability, and acceptance.
 
 Current knowledge graphs overwhelmingly deal with capta; clearly, assertions such as the attributions of authorship, the datings, the classifications of artefacts and cultural heritage items, are not objective measures or evidence, but the outcomes of scholarly analysis, which are typically reported together with their sources and justifications. In these contexts, disagreement is not an anomaly but a structural feature of the domain. The epistemic state of capta is also not uniform: some may be widely accepted, others contested, provisional, hypothetical, abandoned, illusory, or idiosyncratic. In general, capta are not “true” in an absolute sense: at best, they can be seen as broadly accepted within a community. 
From this perspective, knowledge graphs representing only a single, allegedly factual view are losing relevant information: the divergence carries meaning, reflects the state of a debate, the multiplicity of opinions, the evolution of scholarly consensus. A model that aims to support analysis, comparison, and interpretation must therefore represent such divergence explicitly. This produces a complex representational landscape that cannot be adequately captured by treating provenance as a purely ancillary annotation.

Current KG approaches provide only limited support for this task. Models such as PROV-O treat provenance as an external addition to an otherwise unqualified statement, without addressing its epistemic status. Systems such as Wikidata introduce rankings and uncertainty indicators, but do so in a coarse and largely informal manner, without a clear semantic characterisation of the relationship between statements' provenance and reality.

\bigskip

In this paper we pursue three main objectives.
First, we aim to characterize epistemically the different relations that may hold between facts and statements qualified by provenance.
Second, we propose a small but plausible set of epistemic characterizations, and we associate them with \textit{cognitive worlds}, i.e., homogeneous sets of statements sharing similar provenance.
Third, we identify and analyse a number of epistemically relevant phenomena that emerge from this framework, including disagreements, delusions, settlements, and subjective (or personal) opinions.

To achieve this, we adopt cognitive modal logics as our conceptual and formal reference, in particular \textit{doxastic, epistemic},  and \textit{conjectural logics} (hereafter referred to as \textit{DEC}). These logics provide a foundation for modelling epistemic statuses and the relationship between statements and the agents or sources that support them.

Cognitive modal logics supply a number of concepts and mathematical tools that are especially suited to the characterization of cognitive worlds. Among these we list \textit{locality, rationality, intensionality}, and particularly \textit{permeation}. By selectively enabling or constraining them, we are able to obtain a range of cognitive worlds with different relationships to one another and to facts. Additionally, this model supports reasoning over provenance-enhanced statements without the known difficulties that have discouraged reasoning over quoted or reified statements in the Semantic Web, e.g., the \textit{Superman paradox}. 
Beyond static characterization, the framework also supports a type of epistemic evolution of cognitive worlds. This allows the DEC framework to treat acceptance, rejection, and obsolescence of statements not as external events, but as \textit{settlements}, i.e., sequences of evolving epistemic states, transformations of cognitive worlds from  less determined to more determined ones.

A prototype DEC reasoner, implemented on top of Semantic Web technologies, has been created to prove and complement the theoretical DEC framework. The DEC Reasoner shows how cognitive worlds, permeation principles, and epistemic constraints can be put into practice in order to support reasoning over provenance-enhanced statements in a controlled fashion. It also shows that epistemic characterization of cognitive worlds is independent of any specific representational mechanism, supporting named graphs, quoted triples, and classical reification in a similar fashion. 

The paper is organized as follows. Section~\ref{sec:background} provides background on provenance-enhanced statements based on the data/capta distinction, and the limitations of semantically neutral provenance in mainstream KG practice. Section~\ref{sec:modallogics} introduces our reference family of cognitive modal logics (doxastic, epistemic, and conjectural), and details the key notions of locality, rationality, permeation and intensionality. Section~\ref{sec:DECframework} then defines the DEC framework for RDF datasets: it introduces cognitive worlds as provenance-homogeneous epistemic containers, specifies their interaction with a distinguished reality graph, and discusses how epistemically relevant phenomena such as settlement, disagreement, delusion, and personal opinions can be expressed and detected. Section~\ref{sec:decreasoner} describes a prototype DEC reasoner implemented on top of Apache Jena/Fuseki, showing how DEC can be operationalised as a supervisory layer over standard RDF/RDFS/OWL inference. Section~\ref{sec:relatedworks} discusses related work on contextual reasoning, statement-level metadata models, and epistemic approaches to provenance. Section~\ref{sec:conclusions} concludes and outlines directions for future work. 

\section{Provenance and Epistemic Characterization in KGs}\label{sec:background}

In knowledge graphs, provenance is commonly understood as the set of annotations that describe the origin of statements: who asserted them, from which source they were derived, when and how they were produced, and under which conditions they hold. Provenance plays a central role in large-scale, heterogeneous, and collaborative KGs, where data are aggregated from multiple sources with varying degrees of reliability.

The primary functions of provenance are well established: it supports assessments of trust, reliability, and accountability by linking claims to sources and agents; it enables versioning and change tracking, making it possible to record corrections, retractions, and refinements over time; it facilitates data integration, allowing systems to manage overlap and inconsistency across heterogeneous datasets. These roles are extensively documented in the literature and supported by mature standards and infrastructures (see, e.g., \cite{prov-o}, \cite{sikos2020kg}).

A range of technical mechanisms has been developed to attach provenance to RDF statements. Traditional approaches include RDF reification and named graphs; more recent ones include RDF-star \cite{rdfStarEditorsDraft}, now incorporated into RDF 1.2 \cite{CLK26}, and nanopublications, which explicitly separate assertion, provenance, and publication information into distinct graphs \cite{kuhn2013nanopubs}. PROV-O remains the most widely adopted conceptual model for representing provenance across these approaches \cite{prov-o}, while Wikidata relies on a statement-centric model in which each claim is explicitly associated with provenance (and possibly qualifiers and preference ranks) \cite{vrandecic2014}. In CIDOC-CRM, conversely, provenance is an integral part of the ontology, modeled through events and attribution acts performed by agents in time \cite{cidocCRM}.

Provenance is of extreme importance in the infrastructural well-being of KG frameworks. Systematic provenance management supports transparency, reproducibility, debugging, and incremental updates in complex KG pipelines \cite{kleinsteuber2024provenancePlatforms}. Industrial and highly regulated contexts rely on detailed provenance and traceability because knowledge graphs often use the current snapshot of data, and without comprehensive provenance and change history, traceability of changes, and recovery of past states, regulatory compliance cannot be guaranteed \cite{dibowski2024traceability}. Documented and verifiable provenance is fundamental for the quality assurance of knowledge graphs and their content, so much so that manual verification of large datasets is impractical and automated verification of provenance is necessary to ensure that the documented sources actually support the asserted graph facts \cite{amaral2023prove}. In addition to provenance, the effectiveness of different representation techniques, including named graphs and embedded triples, needs to be evaluated (e.g., see \cite{massari2025provenanceDH}) with particular attention to RDF conformance and scalability. 

However, these approaches deliberately adopt a neutral stance with respect to the semantic impact of provenance. As explicitly noted by Sikos and Philp \cite{sikos2020kg}, provenance annotations are intended to provide contextual information “without altering the semantic interpretation” of the underlying statements. In other words, attaching provenance to a triple is not supposed to affect its truth conditions.

While semantic neutrality simplifies modeling, it proves inadequate in many practical situations. A generic provenance-enhanced statement (PES) has the form “According to X, $\varphi$.” The truth of the attribution (“according to X”) and the truth of $\varphi$ are logically independent. A statement may be correctly attributed and true\footnote{These are all examples of \textit{capta} so obviously "true"/"false" should be read as "considered true"/"considered false".} (e.g., “\textit{According to Giorgio Vasari, Leonardo da Vinci painted Mona Lisa}”), incorrectly attributed but true (e.g., “\textit{According to Giorgio Vasari, Pablo Picasso painted Guernica}”\footnote{Giorgio Vasari (1511-1574), the best known biographer of Italian Renaissance painters, knew of Leonardo da Vinci (1452-1519), but died three centuries before the birth of Pablo Picasso (1881-1973)}), correctly attributed but false (“\textit{According to Archbishop Ussher, the Earth was created in 4004 BC}”), or both incorrectly attributed and false (e.g., “\textit{According to Albert Einstein, the Earth was created in 4004 BC}”). Existing provenance models typically avoid this complexity by leaving the truth status of the reified statement unspecified, and, when necessary, by asserting a non-reified version of the same content as factual.

For instance, in Wikidata all claims are represented as reified statements enriched with provenance, qualifiers, and ranks (preferred, normal, deprecated), while a single statement can be promoted as factual by asserting it again as an unreified triple \cite{vrandecic2014}. This approach preserves alternative claims without collapsing them, but at the cost of relegating epistemic distinctions to an informal ranking mechanism and to application-level interpretation \cite{dipasquale2024weakerLogicalStatusWikidata}.

\bigskip
From an epistemological perspective, the very act of adding provenance often signals limited epistemic commitment: if a claim were regarded as unproblematic and directly grounded in reality, there would be little reason to qualify it explicitly by stating who asserted it and under which conditions. Provenance thus functions not only as a technical device, but also as implicit evidence of uncertainty, contestability, and interpretation.

Natural language routinely exploits this dimension. A neutral PES "according to X, $\varphi$" can instead be found expressed as “X writes $\varphi$,” “X believes $\varphi$,” “X proves $\varphi$,” and “X supposes $\varphi$”, each of which conveys a  different relationship between $\varphi$ and factuality, and different expectations about the stability, revisability, and authority of $\varphi$. Neutral provenance relations flatten these distinctions, removing information that is crucial for interpretation and reasoning.

Epistemic characterization of PESs can be viewed through the distinction between \textit{data} and \textit{capta} \cite{Drucker11}. Data denote states of affairs taken to hold independently of interpretation: objective, stable, and observer-independent. Capta, by contrast, denote what is apprehended, selected, interpreted, and reported by cognitive agents. Much of what is commonly treated as data is really the result of interpretive choices; Gitelman’s well-known claim that “raw data is an oxymoron” captures the idea that data are inherently constructed artifacts, shaped by instruments, protocols, and classificatory decisions rather than unmediated representations of reality \cite{gitelman2013}; Kitchin generalizes this view, emphasizing the constructed nature of data across domains \cite{Kitchin_2014}: these opinions suggest (or, indeed, strongly claim) that capta largely outweigh data in most collections of data, scholarly or otherwise.   
Indeed, knowledge graphs, as they are constructed and curated in practice, overwhelmingly encode claims as they appear in scholarly  texts, archival catalogues, scientific publications, or databases, together with their sources and justificatory contexts, and disagreement is frequent, persistent and often irreducible, regardless of domains, humanities or hard sciences. 

Consider for instance the painting known as the \emph{Salvator Mundi}, which is to this day the subject of a prolonged debate regarding its authorship. The well-publicized attribution to Leonardo da Vinci coexists with no less than six other theories where expert positions favor other artists, such as Boltraffio or Salaì, two of the main Leonardo's pupils (e.g., see \cite{zollner2019salvatorMundi}, \cite{artnewspaper2017salvatorMundi}). Each attribution is supported by stylistic analysis, historical documentation, and scholarly discussion, and most probably no argumentation will conclusively establish a single attribution that can be universally accepted. 

Similar situations exist also in scientific domains, where factual objectivity is often assumed even without solid justification. We are not even referring to cases in which no relevant experiment has been performed yet, so a definite answer is obtainable but not obtained at present. On the contrary, we refer to situations in which the literature contains serious and methodologically credible lines of evidence pointing in opposite directions. For instance, for much of the XX century peptic ulcer disease was explained primarily in terms of stress, lifestyle, and excess gastric acid. In the early 1980s, a competing explanatory framework emerged, identifying \emph{Helicobacter pylori} as the causal factor \cite{marshall1984} and later still extensive clinical validation settled the debate and created a new consensus \cite{nih1994}. Yet, more recently, a third multifactorial framework has been proposed (but not yet universally accepted), in which \textit{H. pylori} remains the primary cause, but additional factors such as non-steroidal anti-inflammatory drugs (NSAIDs) and, again, stress are considered as contributing or aggravating elements \cite{malfertheiner2017}. For a significant period, these explanatory models coexisted. In many ways these theories are partially overlapping, partially incompatible, and supported by different bodies of evidence. Yet one was first accepted and then rejected by the scientific community and replaced by another only when new evidence was considered.   

Treating such claims as if they were data misrepresents their epistemic status and obscures the dynamics of knowledge production. A KG that collapses capta into a single factual view loses precisely the information that is most relevant for analysis and interpretation.

\bigskip

A further limitation of current provenance models lies in their focus on individual statements. Scholarly and scientific knowledge typically evolves through \textit{theories}, coherent sets of interdependent claims that are meant to be considered, accepted, revised, or abandoned only as a whole. For instance, Leonardo appears as an author of the Salvator Mundi in at least three different authorship theories (alone or in collaboration with other painters of his school), and similarly stress appears as a cause of peptic ulcer in two separate theories, but we are not in a position to deflate these different positions as compatible subsets or intersections of one another: the relevant theories are, from an epistemological point of view, separate and disjoint even when sharing individual assertions.

Several works on uncertainty in KGs confirm the centrality of this issue, pointing out that conflicts, heterogeneous sources, and competing interpretations of statement aggregates are pervasive, yet are usually managed procedurally rather than modeled explicitly \cite{couceiro2025uncertaintySurvey}. In \cite{vogt2024semanticUnits} the notion of \textit{semantic units} is introduced, namely meaningful subgraphs that support structured reasoning beyond the level of individual triples: epistemically relevant content in KGs needs to be organized at a higher level than single statements.

\bigskip
Yet, even these approaches stop short of providing an explicit epistemic characterization of such units. They identify useful aggregates, but do not specify how they relate to reality or to one another.
Indeed, past attempts for epistemic characterization of KGs raised many concerns about reasoning, witnessed by the well-known \textit{Superman Paradox}: from “Lois Lane believes that Superman can fly” one should not be allowed to infer that “Lois Lane believes that Clark Kent can fly,” even if it is a fact that Superman \textit{is indeed} Clark Kent. 

Such issues have historically discouraged the adoption of epistemic characterization and reasoning on quoted triples in the Semantic Web (e.g., see \cite{w3cAttributions2001}, \cite{rdfStarEditorsDraft}).
As a result, most KG approaches avoid assigning semantic force to epistemic predicates, treating them as opaque annotations rather than as operators with logical consequences. This avoidance, however, comes at the cost of expressive power and conceptual clarity.

\section{Cognitive modal logics} \label{sec:modallogics}

In this section we introduce the concepts of cognitive modal logics that are necessary to turn provenance annotations into well-defined epistemic structures over cognitive worlds, enabling controlled reasoning without the classical opacity pitfalls.

\subsection{Modal logics}\label{subsec:modal-logics}

Classical formal logics define truth as a binary property of formulas relative to a single interpretation. Each formula is either true or false, and logical operators (e.g., \textit{and, or, not, implies}) combine truth values according to fixed tables. Formally, modal logics extend a base logic with modal operators: although propositional logic is often used for exposition, the same construction applies to first-order logics and to description logics, which are particularly relevant in the Semantic Web.
The first additional operator, usually written as $\Box$ (“box”),  qualifies statements as belonging to the modality. The operator is defined formally by a structure called a \emph{Kripke frame}, and its semantics by an attribution of meaning to the box operator which qualifies and characterizes the logic itself in terms of the possible worlds in which the modal formula could become true (see Appendix~\ref{app:kripke} for a more formal introduction to Kripke frames and their properties in modal logic). A second additional operator, $\Diamond$ (“diamond”), is a complementary notion to $\Box$ which is frequently represented as $\Diamond \varphi = \neg \Box \neg \varphi$.

Different readings of the semantics of the $\Box$ and $\Diamond$ operators yield different modal logics, each with its own axioms and frame conditions.
\begin{itemize}
\item \textbf{Alethic logic} (necessity and possibility): 
Connections between worlds represent \emph{logical possibility}, so that $\Box\varphi$ means $\varphi$ is \textit{necessarily true}, while $\Diamond \varphi$ means $\varphi$ is \textit{possibly true}.
\item \textbf{Temporal logic} (time): Connections encode temporal progression. $\Box\varphi$ means $\varphi$ is \textit{always true}, $\Diamond\varphi$ means $\varphi$ is \textit{sometimes true}. 
\item \textbf{Deontic logic} (obligation and permission): Connections capture normatively ideal worlds. $\Box\varphi$ reads as \textit{it is obligatory} that $\varphi$, $\Diamond\varphi$ as \textit{it is permitted} that $\varphi$.
\item \textbf{Provability logic} (arithmetical provability): $\Box\varphi$ is interpreted as $\varphi$ is \textit{provable in a given formal theory $\mathit{T}$} (e.g., Peano Arithmetic), while  $\Diamond\varphi$ is interpreted as $\varphi$ is \textit{consistent with $\mathit{T}$}. 
\item \textbf{Cognitive logics} (subjective perspective): Cognitive logics are a family of modal logics, rather than a single one meant to represent a subjective point of view. They are often enriched with an index $i$ to represent the coexistence of multiple independent subjectivities associated with separate agents. A choice of axioms characterizes the nature and the interaction with reality of each cognitive world. These yield that in doxastic logics $\Box_i \varphi$ means that \textit{agent $i$ believes $\varphi$ is true}, in epistemic logics $\Box_i \varphi$ is used to represent that \textit{agent $i$ knows $\varphi$ is true}, and in conjectural logics $\Box_i \varphi$ reads as \textit{agent $i$ supposes $\varphi$ is true}. Cognitive modal logics have little use for the secondary operator $\Diamond \varphi$ so we will mostly ignore it. 
\end{itemize}

The formal aspects of the connections between possible worlds are characterized by one or more axioms that describe specific proof principles. In particular, an important family of modal logics is called \textit{normal}, and its axioms are well-established and well-studied. 

Specifically, a modal logic is called \emph{normal} if it is an axiomatic system that extends a base logic with a unary modal operator $\Box$ and satisfies the following proof principles.

\begin{itemize}
\item \textbf{Base logic} \\
{\small All valid formulas of the base logic are available as axioms, and the system is closed under Modus Ponens $((\varphi \rightarrow \psi) \wedge \varphi  \ \rightarrow \ \psi)$ and Uniform Substitution ($\varphi \rightarrow \varphi[\sigma]$, where $\sigma$ is a substitution replacing every variable with an arbitrary formula). The non-modal inferential structure is exactly as in the chosen base logic.} 
\item \textbf{Rule N (Necessitation)}: From $\vdash \varphi$ infer $\vdash \Box \varphi$. \\
{\small Rule \textbf{N} injects theorems of the base logic into the modal level, so anything derivable without $\Box$ can also be derived with it.}
\item \textbf{Axiom K}: $\Box(\varphi \rightarrow \psi) \rightarrow (\Box \varphi \rightarrow \Box \psi)$ \\ 
{\small Axiom \textbf{K} makes reasoning under $\Box$ support distributivity inside the modal logic, guaranteeing that reasoning mirrors Modus Ponens inside the modality.}
\end{itemize}

Additionally, they may be chosen to satisfy one or more of the following axioms: 
\begin{itemize}
\item \textbf{Axiom D}: $\Box \varphi \rightarrow \neg \Box \neg \varphi$ \\ 
{\small Axiom \textbf{D} excludes modal inconsistency, since the system cannot simultaneously validate $\Box \varphi$ and $\Box \neg \varphi$, and it also forbids trivialities such as $\Box \bot$.}
\item \textbf{Axiom T}: $\Box \varphi \rightarrow \varphi$ \\
{\small Axiom \textbf{T} enforces reflection from the modal level to the base level, so that anything established under $\Box$ must also hold without it.}
\item \textbf{Axiom 4}: $\Box \varphi \rightarrow \Box \Box \varphi$ \\
{\small Axiom \textbf{4} enforces stability under iteration: once a formula is inside $\Box$, reapplying $\Box$ adds no further strength to it.}
\item \textbf{Axiom 5}: $\neg \Box \varphi \rightarrow \Box \neg \Box \varphi$ \\ 
{\small Axiom \textbf{5} enforces recognizability of non-commitment: whenever $\Box \varphi$ fails, the system can state that it fails.}
\end{itemize}

An additional axiom, discussed in \cite{Vit26} and not part of the usual set of axioms of normal modal logics, is worth being introduced here:

\begin{itemize}
\item \textbf{Axiom C}: $\varphi \rightarrow \Box \varphi$ \\
{\small Axiom \textbf{C} expresses the idea that the modality mirrors at least the full set of defined formulas of the base logic.} 
\end{itemize}

In literature, frequent combinations of axioms are given customary names, such as  $\mathbf{K}$ for the core system with Base, Rule \textbf{N}, and axiom \textbf{K}; \textbf{KD45} for Base, Rule \textbf{N}, and axioms \textbf{K, D, 4,} and \textbf{5}; and \textbf{S5} for Base, Rule \textbf{N}, and axioms \textbf{K, T,} and \textbf{5}\footnote{Please note that \textbf{T} implies \textbf{D}, and \textbf{T} + \textbf{5} implies \textbf{4}: although unmentioned, they both hold in \textbf{S5}. }. We add to this list  at least two new combinations, namely \textbf{KC} and \textbf{KCD}, for Base, Rule \textbf{N}, and axioms \textbf{K, C, [D]}\footnote{\textbf{C} implies both \textbf{4} and \textbf{5}, so they hold in \textbf{KC} and \textbf{KDC} even if not explicitly mentioned.}.  

\bigskip

Beyond their formal role, the modal axioms introduced above admit a direct conceptual reading that is particularly relevant for cognitive interpretations, i.e., the main discussion topic of this paper.

\begin{itemize}

\item Rule \textbf{N} introduces an idea of \textit{predictability}: every theorem on the non-modal fragment of the logic is also a theorem inside the modal fragment. Thus reasoning in the modal world is understandable and homogeneous and predictable, being fully in line with reasoning in the outside world. This also implies that the conclusions you may reach inside the modality are only dependent on the initial atomic modal facts, and not on a different reasoning model.  

\item Axiom \textbf{K} expresses a principle of \emph{locality}: reasoning inside the modality is dependent only on what is already accessible under that modality. Inference under $\Box$ mirrors inference in the base logic, but remains confined to the modal context. This ensures that implications are available locally, without polluting distinct informational worldviews.

\item Axiom \textbf{D} captures a notion of \emph{rationality}. By excluding the simultaneous validation of $\Box\varphi$ and $\Box\neg\varphi$, it enforces internal coherence of the modal state. In cognitive terms, this corresponds to the requirement that an agent’s beliefs, assumptions, or conjectures do not allow contradictions.

\item Axioms \textbf{T} and \textbf{C} regulate different forms of \textit{permeation}, i.e., reciprocal access to formulas between the base logic and the modal layer. Axiom \textbf{T} enforces \textit{inside-out permeation}: whatever is established under the modality must also hold at the base level. Axiom \textbf{C}, by contrast, enforces \textit{outside-in permeation}: whatever is established at the base level is inherited by the modality. Depending on their presence, these axioms determine whether the modality isolates, constrains, or extends the formulas of the base logic.

\item Modal logics frequently ensure that the inferential mechanisms of the base logic, and most notably Uniform Substitution, do not traverse the locality boundaries of the modal operators. This principle is known as \textit{intensionality}, implying that formulas which are equivalent in the base logic are not, in general, interchangeable within modal contexts. This restriction, standard in modal logic, will play a central role in the treatment of cognitive modalities and will be discussed further in Section~\ref{sec:superman}.

\item If the choice of axioms includes both axiom \textbf{T} and axioms \textbf{C}, the corresponding modal logic immediately leads to the so-called \textit{modal collapse}, i.e., a system in which the modal and non-modal sets of formulas are identical, or to vacuity in the modal system: $\Box \varphi \rightarrow \varphi \text{ and }  \varphi \rightarrow \Box \varphi \text{ imply } \Box \varphi \iff \varphi  $.  

\item Modal collapse is caused by axiom \textbf{C} also without axiom \textbf{T}, as long as the base logic is \textit{bivalent}, namely, \textit{every well-formed formula is always determined}, i.e., either true or false (in addition to being nontrivial and classical). However, without bivalence (and without axiom \textbf{T}), such collapse is not inevitable, and non-trivial systems can be generated. Thus axiom \textbf{C} makes sense when using a non-bivalent base logic such as Supervaluation \cite{Fine1975Supervaluation}, Kleene's Weak Logic \cite{Kleene1952Metamathematics} or, relevant for our paper, Description Logics \cite{BaaderEtAl2003DLHandbook}, allowing for well-formed formulas to be either undefined or unentailed. A longer discussion of this axiom, and of the necessity of the assumption of non-bivalence, is explained in more detail in \cite{Vit26}. 

\end{itemize}

\subsection{Cognitive modal logics}\label{subsec:cognitive-modal-logics}

We call \emph{cognitive modal logics} a family of systems designed to model subjective informational states, that is, agents’ beliefs, knowledge, or hypotheses, rather than absolute necessity or permission or temporal features. 
Cognitive modal logics share some structural characteristics. First, their semantics explicitly separates reality from subjective states. Second, they admit multiple independent modal systems, one per agent or cognitive stance, each governed by its own relation and valuation. Finally, rather than treating additional axioms as incremental refinements of a single modality, cognitive logics treat axiom choices and combinations as the foundation of qualitatively distinct reasoning modes. Here we discuss three types of cognitive modal logics, \textit{doxastic, epistemic}, and \textit{conjectural}, hereinafter described as \textit{DEC modal logics}. 

\paragraph{Doxastic logic (KD45)}

 Intuitively, for each agent $i$, the operator $\Box_i$ formalizes that the agent takes a statement to hold within its beliefs regardless of its valuation outside of them. Axiom \textbf{D} enforces internal coherence of the belief state, axiom \textbf{4} enforces stability under iteration for what is inside the belief state, and axiom \textbf{5} enforces recognizability of non commitment. Together they yield the well-known system \textbf{KD45}, the canonical abstract account of idealized belief \cite{Hintikka1962KnowledgeBelief}.

\paragraph{Epistemic logic (S5)}

In its standard normal form, epistemic logic is captured by the combination of axioms \textbf{K}, \textbf{T}, and \textbf{5} over the chosen base logic with Rule \textbf{N} (note that \textbf{T} and \textbf{5} entail \textbf{D} and \textbf{4}). For each agent $i$, the operator $\Box_i$ now formalizes that the agent knows the statement, which presupposes its factuality. Additionally the choice of axioms also enforces internal coherence (axiom \textbf{D}, now available as a theorem, is a direct application of Axiom \textbf{T}), and the two standard forms of introspection. The resulting system \textbf{S5} is the classical abstract account of factual knowledge in multi-agent settings \cite{Fagin1995ReasoningKnowledge}.

\paragraph{Conjectural logic (KDC)}

Conjectural logic models hypothetical reasoning based on true information. Agents have access to all currently established facts, and in addition they may adopt one or more unverified assumptions (e.g., in order to explore their consequences). Formally, the standard axioms of rational doxastic logic, \textbf{K} and \textbf{D}, are enriched with axiom \textbf{C} that stipulates that all established predicates are adopted within the hypothetical scenario together with chosen non-established ones. Introspection, under the form of axioms \textbf{4} and \textbf{5}, is provided as theorems within this setup, rather than axioms. 

\subsubsection{The inclusion theorem}

The relationships between doxastic, epistemic, and conjectural logics can be understood within a precise structural characterization, expressed by the inclusion theorem of cognitive logics \cite{Vit26}, a unifying semantic account of how these logics relate to one another and to reality.

\begin{definition}[Cognitive world]\label{def:cognitive-world}
Let $P$ be a non-empty set of propositional atoms, and $\top$ and $\bot$ be two distinct values, interpreted as “true” and “false” respectively.
 A \emph{cognitive world} is a set $w \subseteq P \times \{\top,\bot\}$. Atoms not occurring in $w$ are left unassigned.
A world $w$ is \emph{consistent} iff there is no $p\in P$ such that $(p,\top)\in w$ and $(p,\bot)\in w$.
\end{definition}

\begin{definition}[Reality]\label{def:reality}
Reality is a designated cognitive world $r$ whose only requirement is to be consistent.
\end{definition}

\begin{definition}[Designated modal logic]\label{def:designated-modal-logic}
A \emph{designated modal logic} for $(r,w)$, written $\Sigma_{r,w}$, evaluates atomic formulas at $r$ as follows: 
\[
\begin{array}{r @{\;} c @{\;} l @{\;} c @{\;} l}
\Sigma_{r,w} & \models & p      & \iff & (p,\top)\in r, \\
\Sigma_{r,w} & \models & \Box q & \iff & (q,\top)\in w.
\end{array}
\]

The evaluation is then extended to all non-atomic formulas of the base logic by standard Boolean clauses and Rule~\textbf{N}.

\end{definition}

\begin{definition}[Cognitive classes]\label{def:cognitive-classes}
We define:
\begin{align*}
\textit{D}_r &:= \{\, w \subseteq P\times\{\top,\bot\} \,\} \\
\textit{E}_r &:= \{\, w \subseteq P\times\{\top,\bot\} \mid w \subseteq r \,\} \\
\textit{C}_r &:= \{\, w \subseteq P\times\{\top,\bot\} \mid r\subseteq w \,\}.
\end{align*}
\end{definition}

Each cognitive world $x$ induces, via $\Sigma_{r,x}$, a corresponding modal logic, and the inclusion theorem provides a strong set-theoretical connection between the cognitive classes of $r$ and the modal logics built upon $r$\footnote{We added index $r$ to $\textit{D}_r$ out of uniformity with the other sets, although strictly speaking it does not depend on $r$ at all.}:

\begin{theorem}[Inclusion Theorem]\label{thm:inclusion}
For any cognitive world $x \subseteq P\times\{\top,\bot\}$:
\begin{align*}
\Sigma_{r,x} \text{ validates } \mathbf{K}
    \;&\Longleftrightarrow\; x \in \textit{D}_r; \\
\Sigma_{r,x} \text{ validates } \mathbf{K}\mathbf{D}
    \;&\Longleftrightarrow\; x \in \textit{D}_r\ \text{and } x\ \text{is consistent,} & \text{i.e., x is doxastic}; \\
\Sigma_{r,x} \text{ validates } \mathbf{K}\mathbf{T}
    \;&\Longleftrightarrow\; x \in \textit{E}_r  & \text{i.e., x is epistemic}; \\
\Sigma_{r,x} \text{ validates } \mathbf{K}\mathbf{C}
    \;&\Longleftrightarrow\; x \in \textit{C}_r; \\
\Sigma_{r,x} \text{ validates } \mathbf{K}\mathbf{D}\mathbf{C}
    \;&\Longleftrightarrow\; x \in \textit{C}_r\ \text{and } x\ \text{is consistent},  & \text{i.e., x is conjectural}.
\end{align*}
\end{theorem}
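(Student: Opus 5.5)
The plan is to treat each of the five equivalences separately. In each case I would reduce the validity of the relevant axiom schema under $\Sigma_{r,x}$ to its atomic instances, and then read those instances off the two clauses of Definition~\ref{def:designated-modal-logic}. The first step is to fix what ``validates'' means here. $\Sigma_{r,x}$ validates an axiom schema iff every instance is true at $r$ under the evaluation obtained by closing the atomic clauses under the Boolean connectives and Rule~\textbf{N}. I would also read $\Box\neg q$ as $(q,\bot)\in x$, and $\neg q$ at $r$ as $(q,\bot)\in r$. This is the natural three-valued reading suggested by the non-bivalence discussion, and the consistency conditions only make sense under it.

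For $\mathbf{K}$ the argument is structural. Rule~\textbf{N} and the Boolean extension make $\Box$ distribute over implication by construction, since $\Box(\varphi\to\psi)$ together with $\Box\varphi$ is evaluated through the same Boolean clauses inside $x$. Hence $\mathbf{K}$ holds for every $x$, and the condition $x\in D_r$ is vacuous.

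For $\mathbf{KD}$ I would first show that $\mathbf{D}$ fails iff some $q$ has both $\Box q$ and $\Box\neg q$ true. By the reading above, this happens iff $(q,\top),(q,\bot)\in x$, i.e.\ iff $x$ is inconsistent.

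For $\mathbf{KT}$, the direction $x\subseteq r$ $\Rightarrow$ $\mathbf{T}$ goes as follows. On atoms and negated atoms, $\Box q$ gives $(q,\top)\in x\subseteq r$, hence $q$ holds. The general case is then an induction on formulas using the Boolean clauses. The converse is by contraposition: if some $(q,v)\in x\setminus r$, then the instance $\Box q\to q$ (or $\Box\neg q\to\neg q$) fails at $r$. Note also that $x\subseteq r$ makes $x$ consistent automatically, because $r$ is.

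$\mathbf{KC}$ is exactly dual. If $r\subseteq x$, then each atomic literal true at $r$ is boxed, and induction lifts this to all formulas. If some $(q,v)\in r\setminus x$, the atomic instance $q\to\Box q$ fails. $\mathbf{KDC}$ then follows by combining the $\mathbf{D}$ and $\mathbf{C}$ arguments, since the two conditions are independent.

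The main obstacle is the inductive lifting from atoms to compound formulas in the $\mathbf{T}$ and $\mathbf{C}$ cases, especially for $\mathbf{C}$. A formula like $p\vee\neg p$ is a base-level theorem, and by Rule~\textbf{N} it is boxed anyway. But a disjunction $p\vee q$ true at $r$ because $p$ holds requires $p$ to be handled within $x$. So the induction must run over formulas in negation normal form, with positive and negative literals treated by the two halves of the literal reading. Validity must also be understood relative to undetermined atoms in the weak Kleene or description-logic sense; otherwise, as the paper itself remarks, $\mathbf{C}$ over a bivalent base forces collapse. I would state the induction for literals, $\wedge$ and $\vee$ under that reading, and defer the fine semantic conventions to \cite{Vit26}.
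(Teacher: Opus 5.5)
The paper states this theorem without proof (it cites \cite{Vit26}), so your argument has to stand on its own. As written it has two genuine gaps.

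\textbf{Inconsistent worlds.} Your $\mathbf{K}$ step and your $\mathbf{C}$ step cannot both be right once $x$ may be inconsistent, and lines 1 and 4 explicitly allow that, since $D_r$ and $C_r$ contain inconsistent sets. Under your literal reading, take $r=\{(p,\bot),(q,\bot)\}$ and $x=r\cup\{(p,\top)\}$, so $x\in C_r$.
\begin{itemize}
\item At $r$ both atoms are false, so $p\to q$ is true under any Boolean or Kleene clause.
\item $\mathbf{C}$ therefore forces $\Box(p\to q)$, and the atomic clause gives $\Box p$.
\item $\mathbf{K}$ then forces $\Box q$, although $(q,\top)\notin x$.
\end{itemize}
So for this $x$ line 1 and line 4 cannot both hold. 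Your claim that $\mathbf{K}$ holds ``by construction'' is exactly where consistency is used silently. From $\neg\varphi\vee\psi$ and $\varphi$ holding in $x$ you get $\psi$ only if $x$ cannot make $\varphi$ both true and false. Closing the box under $\mathbf{K}$ and $\mathbf{N}$ over a classical base does not rescue this. An inconsistent $x$ then yields $\Box\bot$ and boxes every formula. That violates the atomic clause $\Box q\iff(q,\top)\in x$, and it makes $\mathbf{C}$ hold even when $r\not\subseteq x$. You must either restrict the $\mathbf{K}$ and $\mathbf{C}$ lines to consistent $x$, or fix a specific treatment of inconsistent worlds and prove the lines relative to it. Lines 2, 3 and 5 are not affected by this issue, because $x\subseteq r$ already forces consistency and the other two assume it. Simply deferring this to \cite{Vit26} leaves the proof incomplete.

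\textbf{The reading of ``validates''.} Your convention breaks the $\Leftarrow$ directions. Suppose every instance must be true at $r$ under Kleene connectives, and take $r=x=\varnothing$, which lies in both $E_r$ and $C_r$.
\begin{itemize}
\item The instance $q\to\Box q$ has an undetermined antecedent and a consequent that is not true. It is therefore not true in strong or in weak Kleene, so $\mathbf{C}$ fails although $r\subseteq x$.
\item In weak Kleene, $\Box q\to q$ is not true either, so $\mathbf{T}$ fails although $x\subseteq r$.
\item Your Rule~$\mathbf{N}$ remark makes this worse. With $r=x=\varnothing$, $\Box(p\vee\neg p)$ holds, while $p\vee\neg p$ is not true at $r$ in either Kleene logic. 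So $\mathbf{T}$ fails again although $x\subseteq r$.
\end{itemize}
To repair this, read the main conditional of $\mathbf{T}$ and $\mathbf{C}$ as truth preservation at $r$, as the appendix does for $\mathbf{C}$. You must also make the evaluation at $r$ match the theorems that $\mathbf{N}$ boxes. Weak Kleene and the description-logic (entailment) reading are not interchangeable here, and they need different inductions:
\begin{itemize}
\item With a genuinely Kleene base there are no tautologies, so $\mathbf{N}$ is vacuous and the $p\vee\neg p$ remark must go. Your NNF induction then amounts to monotonicity of Kleene evaluation under extension of consistent partial valuations.
\item With an entailment or supervaluational reading, $\vee$ is not truth-functional, so the literal-by-literal induction does not work. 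Replace it by the observation that when $x\subseteq r$ every classical completion of $r$ is a completion of $x$, and symmetrically when $r\subseteq x$.
\end{itemize}
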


The inclusion theorem therefore establishes a strict correspondence between cognitive modal axioms and the set-theoretic relations of cognitive worlds with respect to the designated reality: all consistent cognitive worlds built upon subsets of $P$ are doxastic, and once a reality world $r$ is designated, all subsets of $r$ are doxastic worlds, and all supersets of $r$ are conjectural worlds. 

\subsubsection{The \textit{settle} operator}

Dynamic modal logics extend static modal systems with operators that describe the evolution of epistemic states over time. In the epistemic setting, this tradition is represented by Dynamic Epistemic Logic (DEL) \cite{vanDitmarsch2007DEL}, whose operators—such as public announcement, update, and revision—model changes in information by restricting the set of admissible worlds. Formally, these operators act by eliminating worlds that are incompatible with newly acquired information, thereby refining an agent’s epistemic range.

Within the present framework, however, such operators are insufficient. The inclusion theorem characterizes cognitive logics not by accessibility relations among worlds, but by the set-theoretic position of a cognitive world $x$ with respect to a designated reality $r$. In this setting, epistemic change is not primarily a matter of discarding worlds, but of modifying which propositions count as established at the base level.

To account for this form of epistemic evolution, \cite{Vit26} introduces the dynamic operator \emph{settle}. The role of \textit{settle} is to transform conjectural commitment into factual establishment. 

Formally, the operator \textit{settle} is applicable to conjectural worlds, i.e., to $x$ such that $x \in \textit{C}_r$. 
Given a formula $\varphi$ such that $(\varphi,\top)\in x$ but $(\varphi,\top)\notin r$, the application of $\textit{settle}\,\varphi$ produces a new designated reality $r' = r \cup \{(\varphi,\top)\}$. 
With respect to $r'$, the old reality $r$ becomes epistemic, since $r \subseteq r'$ now holds.

Unlike standard DEL operators, \textit{settle} induces a shift in the inclusion hierarchy: a world that was conjectural with respect to the original reality becomes the new reality, makes the old reality an epistemic world, and makes delusional every conjectural world containing $(\varphi,\bot)$. 
The purpose of the \textit{settle} operator is therefore to capture epistemic stabilization without collapsing the modal structure or resorting to world elimination, providing a formal mechanism for transitions between cognitive worlds.

\section{The DEC Framework}\label{sec:DECframework} 

This section introduces how to apply cognitive modal logics to the Semantic Web to model provenance-enhanced content. We call this approach the \textit{DEC framework} (for \textit{doxastic, epistemic }and\textit{ conjectural}).

DEC takes as its primary target provenance-enhanced statements (PES), understood as statements whose presence in a knowledge graph is justified by an act of assertion rather than by immediate factual verification, i.e., as \textit{capta}: their semantic relevance depends on who asserted them, under which conditions, and with which epistemic commitment.
PES typically do not occur in isolation. Instead, they appear in structured aggregates (\textit{theories} or \textit{cognitive worlds}) that are proposed, defended, contested, or abandoned as a whole, even when some of their constituent statements may also occur in other, competing theories. 

Within DEC, theories are the primary carriers of epistemic characterisation: provenance identifies aggregates of claims to be interpreted together. The semantic role of provenance is therefore to delimit such units and to support their epistemic interpretation. This role is realized through governing predicates: provenance relations that link agents, sources, or justificatory acts to theories and therefore to their constituent statements. Governing predicates determine how a theory is to be interpreted epistemically and how it relates to other theories and to established knowledge.  

\subsection{Cognitive Worlds and Cognitive Modal Logics}\label{subsec:cw-and-cml}

As mentioned, \textit{cognitive worlds}, i.e., theories, are collections of statements that are homogeneous with respect to provenance and epistemic commitment, and that must be evaluated as a whole rather than statement by statement.

Cognitive modal logics, i.e., doxastic, epistemic, and conjectural logics, provide both the conceptual vocabulary and the mathematical structure for describing how agents relate to asserted facts, how they organise beliefs and hypotheses, and how these attitudes interact. In DEC, we adopt these logics and their axioms as a guide to classify and constrain different kinds of cognitive worlds. We shall use a simple zoological setting for our examples.  

At the base of all cognitive worlds lies axiom \textbf{K}, which enforces \textit{locality}: what is true or inferable inside a cognitive world depends only on the content of that world. No inference automatically crosses the boundary between worlds, and no commitment is inherited unless explicitly specified. This corresponds directly to the idea that a theory, understood as a provenance-homogeneous aggregation of statements, stands or falls on its own terms. 

From \textbf{K}, additional constraints generate more specific kinds of cognitive worlds. Adding axiom \textbf{D} introduces \textit{rationality} by ruling out internal inconsistency. A doxastic world governed by \textbf{K+D} is still local and belief-based, but it cannot contain explicit contradictions i.e., it is rational, even if its beliefs are at odds with asserted facts. For instance, it is a fact that whales are mammals, and that mammals and fishes are disjoint, but consider the case in which Bruce believes that whales are fishes. If we characterize Bruce's cognitive world as doxastic, facts and Bruce's belief are independently characterized and may coexist without permeation or contradiction. 

Adding axiom \textbf{T} introduces \textit{inside-out permeation}: every local statement  is also asserted (i.e., true). \textbf{K + T} therefore characterises epistemic worlds, whose content represents (possibly incomplete) \textit{knowledge of facts} and not mere belief. Epistemic worlds are constrained both by locality and by rationality: they are internally consistent, and they include all asserted facts relevant to their domain. Suppose for instance that Alice finds out that Moby Dick is a whale, and let us consider her beliefs as epistemic (i.e., facts). Regardless of Alice's own knowledge of biological facts about whales, this is enough to let us conclude that Moby Dick is, in fact, a mammal, since Alice's local statement permeated into reality. 

Starting again from \textbf{K}, axiom \textbf{C} introduces \textit{outside-in permeation}, injecting asserted facts into the cognitive world. \textbf{K + C} characterise conjectural worlds, in which hypotheses extend the domain of asserted facts. When rationality is also required, \textbf{K + D + C} characterise internally coherent hypothetical extensions of factuality. For instance, assume that Carl conjectures dragons to be reptiles. This hypothesis does not contradict asserted facts such that dragons are fictional animals and that Toothless is a dragon, and Carl's point of view therefore produces a coherent conjectural world. It follows that, within Carl's conjecture, it is possible to conclude that Toothless is fictional and that Toothless is a reptile. 

In addition to doxastic, epistemic, and conjectural worlds, DEC also recognises \textit{verbatim worlds}, which record statements without cognitive commitment, \textit{personal opinions}, completely detached from reasonably verifiable statements (as in "David believes that pizza is better than hot dogs"), and \textit{delusional worlds}, which prove inconsistent when confronted with asserted facts (in fact, Bruce's belief that whales are fishes is in direct contradiction with asserted facts, and therefore Bruce's cognitive world is both doxastic \textit{and} delusional). 

Regardless of the typology of cognitive world, the same structural principles apply: cognitive worlds are collections of statements that share provenance and epistemic fate; locality controls inference boundaries; rationality constrains internal coherence; permeation regulates the relationship between worlds and asserted facts. Together, these principles allow DEC to model a wide spectrum of epistemic situations while keeping the distinction between facts, beliefs, and hypotheses explicit and formally grounded.

The inclusion theorem allows us to conclude that, given a set of atomic statements $P$, and a selection of assigned statements within $P$, called $r$ (for \textit{reality}), every set of assigned statements is a doxastic cognitive world, any subset of $r$ is an epistemic cognitive world, and every superset of $r$ is a conjectural world. If we then consider $\overline{r}$ as the opposite of $r$ (a cognitive world with the same assignments as $r$, but where every true statement in $r$ is false and vice versa), then a \textit{delusion} is a doxastic world $w$ where $w \cap \overline{r} \neq \varnothing$ (i.e., there is at least one statement contradicting reality), while a \textit{personal opinion} is a cognitive world $y$ such that $y \cap r = \varnothing \ \land \ y \cap \overline{r} = \varnothing$, i.e., no statement in $y$ is asserted, either as true or as false. 

\subsection{Representing Cognitive Worlds with Semantic Web Technologies}\label{subsec:representation}

DEC does not prescribe a specific representation formalism. Instead, it specifies how epistemic commitment is to be interpreted once statements and their provenance have been encoded using existing Semantic Web technologies. This section illustrates how cognitive worlds can be represented in practice, distinguishing between the basic representation of statements, the addition of epistemic characterisation, and the derivation of factuality and inference.

At the base level, the asserted fact "\textit{Moby Dick is a whale}" is represented as an ordinary RDF triple. This representation carries no epistemic information: it expresses a statement that is taken to hold in the dataset, without specifying who asserts it or under which epistemic stance.

At this level, RDF semantics applies uniformly. If the triple belongs to the set of asserted facts, it contributes to factuality and participates fully in inference.

Epistemic characterisation is introduced by associating the statement with a governing actor (e.g., \textit{Alice}) via a governing predicate (e.g., \textit{says, believes, knows, supposes}). Conceptually, this creates a cognitive world.
\begin{itemize}
    \item Using \textbf{named graphs}, the triple is placed inside a graph that is linked to Alice via a cognitive predicate. The graph as a whole represents Alice’s cognitive world. The internal content is the statement \textit{Moby Dick is a whale}; the external relation expresses that Alice has a specific cognitive relation to it. For example: 
    \captionsetup{type=listing}
    \begin{Verbatim}
    
    :whale rdfs:subClassOf :mammal.
    :alice :knows :factsKnownByAlice.
    GRAPH :factsKnownByAlice {
        :mobyDick a :whale.
    }
    \end{Verbatim}
    \captionof{listing}{A cognitive world as a named graph.}
    \label{lst:rep-named-graphs}
    
    \item Using \textbf{RDF-star} (or, RDF 1.2), the same triple is embedded and annotated with a provenance relation indicating that Alice knows it. The cognitive world is not an explicit container but the set of embedded triples governed by the same epistemic predicate.
    \captionsetup{type=listing}
    \begin{Verbatim}

:whale rdfs:subClassOf :mammal.
:alice :knows << :mobyDick a :whale >>.
    \end{Verbatim}
    \captionof{listing}{A cognitive world as a quoted triple.}
    \label{lst:rep-rdf-star}

    \item Using classical \textbf{reification} or \textbf{n-ary relations}, the statement is represented as a node with explicit links to subject, predicate, and object, and that node is connected to Alice through an epistemic relation. The cognitive world is reconstructed as the collection of all statement nodes sharing the same governance.
    \captionsetup{type=listing}
    \begin{Verbatim}

:whale rdfs:subClassOf :mammal.
:alice :knows :s1.
:s1 a rdf:Statement;
    rdf:subject :mobyDick;
    rdf:predicate rdf:type;
    rdf:object :whale.
    \end{Verbatim}
    \captionof{listing}{A cognitive world as an RDF reification.}
    \label{lst:rep-reification}
    
\end{itemize}

So far, no cognitive characterisation has been specified, and in all cases, such characterisation is external to the statement content. What differs is how strongly the internal entities are isolated from the rest of the graph. 

As explained in section \ref{sec:decreasoner}, DEC uses two ways to characterise cognitive worlds and therefore their cognitive stance, either by explicitly assigning a DEC type to the world, e.g., 
\begin{Verbatim}

    :factsKnownByAlice a dec:epistemicWorld.
    
\end{Verbatim}
\label{lst:rep-explicit-world-typing}

, or by assigning a cognitive stance to the governing predicate, e.g.: 

\begin{Verbatim}

    :knows a dec:epistemicPredicate.
    
\end{Verbatim}
\label{lst:rep-predicate-typing}

Of course, the first approach is only possible if the cognitive world is provided with an identifier, i.e. with named graphs. Other syntaxes require the characterization of the predicate. 

If Alice’s cognitive world is typed as epistemic, as in this case, its content contributes to asserted facts. In this case, a DEC processor will \textit{permeate} its content in factuality and will make sure that it is treated as an asserted fact for semantic evaluation. If the same content were associated with Alice through a doxastic predicate, it would not permeate to factuality, regardless of internal representation. Similarly, if it were associated through a conjectural predicate, then all facts would permeate inside the cognitive world, regardless of internal representation.  

Thus all three representation models, when run through a DEC processor and a reasoner, would generate two more facts: 

\begin{Verbatim}

    :knows a dec:epistemicPredicate.
    :mobyDick a :whale, :mammal.
    
\end{Verbatim}
\label{lst:rep-permeation-consequences}

Thus DEC scaffolding is independent of syntax: contribution to factuality depends entirely on the epistemic typing of the cognitive world or of the governing predicate. Once cognitive worlds have been fully permeated, standard inference (RDF, OWL, etc.) applies separately on factuality and on each cognitive world. 

The chosen representation model does not affect cognitive inference but only how easily statements can interact between worlds and factuality. Reification and n-ary relations expose internal entities more directly. Even when a statement does not contribute to factuality, its components may still be accessible to inference. DEC does not eliminate this behaviour but makes it explicit: intensional isolation is a property of the representation choice, not of the epistemic model itself.

DEC assumes that all statements sharing the same governing actor and governing predicate belong to the same cognitive world, and if the chosen representation does not allow aggregation it will replicate the governing part accordingly. For instance, given:

\begin{Verbatim}

    :dragon rdfs:subClassOf :fictionalAnimal.
    :toothless a :dragon.
    :carl :supposes << :dragon rdfs:subClassOf :reptile >>.
    :supposes a dec:conjecturalPredicate.
    
\end{Verbatim}
\captionof{listing}{A conjecture as a quoted triple.}
\label{lst:rep-conjectural-input}

, where \texttt{:supposes} characterises conjectural worlds, i.e., permeates facts inside worlds, a DEC engine would be able to generate the following new quoted triples by repeating their governing data:

\begin{Verbatim}

    :carl :supposes << :dragon rdfs:subClassOf :fictionalAnimal >>.
    :carl :supposes << :toothless a :dragon >>.
    :carl :supposes << :toothless a :reptile >>.
    :carl :supposes << :toothless a :fictionalAnimal >>.
    
\end{Verbatim}
\captionof{listing}{Governing predicates are repeated for quoted triples.}
\label{lst:rep-conjectural-generated}

\subsection{DEC Interpretation and RDF Semantics}\label{subsec:dec-rdf-semantics}

DEC introduces a semantic structuring of RDF graphs in which some graphs are interpreted as expressing specific cognitive stances. Only a subset of these graphs contributes to semantic commitment.

RDF~1.2 defines interpretations for RDF graphs by assigning denotations to IRIs,
literals, and properties, and by specifying satisfaction and entailment for individual
graphs. Although RDF datasets may contain multiple named graphs, RDF~1.2 does not assign
any semantic role to graph names as dataset components: a graph name acquires a
denotation only when it occurs as an RDF term inside a triple. As a consequence, standard
RDF semantics applies to graphs in isolation and does not provide a semantic account of
datasets as structured collections of graphs.

DEC builds on this foundation. It leaves the RDF interpretation unchanged and introduces
its semantic extension at the level of graph organisation: graphs may be grouped,
related, and interpreted with respect to cognitive roles, while the meaning of RDF
terms and the evaluation of individual triples remain governed by RDF semantics.

Within DEC, graphs may be associated with \emph{cognitive worlds}. A cognitive world
collects statements that share provenance and cognitive status. This organisation is semantic rather than syntactic: it does not rely on
new RDF constructs, and it does not alter the interpretation of RDF terms. Indeed, the notion of cognitive worlds applies independently of the specific representational
mechanism used to associate statements with cognitive roles. Named graphs provide a
direct way to group statements into worlds, but equivalent groupings can also be
obtained through quoted triples, classical reification, or n-ary relations, by means of
governing relations that link statements to sources, agents, and cognitive roles. In
these cases, cognitive worlds are not explicit graph containers but emerge from the
structure expressing provenance relations. 

Across all representations, one basic principle
for entailment holds: only epistemic content contributes to factuality, and \textit{DEC satisfaction is determined by RDF satisfaction of epistemic content alone}.

In the following, let $\mathcal{G}$ denote the class of all RDF graphs.

\begin{definition}[DEC-interpretation]\label{def:dec-interpretation}
A \emph{DEC-interpretation} is a tuple
\[
J = (I, W, N, \iota, \gamma, \tau)
\]
where:
\begin{itemize}
\item $I$ is an RDF interpretation as defined in RDF~1.2;
\item $W$ is a non-empty set of cognitive worlds;
\item $N$ is a set of graph names designated as cognitive;
\item $\iota : N \rightarrow W$ is an injective mapping from cognitive graph names to cognitive worlds;
\item $\gamma : W \rightarrow \mathcal{G}$ assigns to each world its RDF graph content;
\item $\tau : W \rightarrow \mathcal{T}$ assigns a cognitive type to each world, and $\mathcal{T}$ is a finite set of cognitive types.
\end{itemize}
A graph $g_n \in \mathcal{G}$ is associated to its name $n \in N$ as follows:  $g_n = \gamma(\iota(n))$.

\end{definition}

No semantic role is assigned to graph names outside $N$, and the RDF interpretation $I$ is not extended to dataset components.

\bigskip

Among cognitive worlds, those characterized as \textsf{Epistemic} play a special role. DEC defines a graph called \emph{reality}, constructed from epistemic content.

\begin{definition}[Reality graph]\label{def:reality-graph}
Given a DEC-interpretation $J=(I,W,N,\iota,\gamma,\tau)$, the \emph{reality graph} is defined as:
\[
r = \bigcup \{\, \gamma(w) \mid w \in W \text{ and } \tau(w)=\textsf{Epistemic} \,\}.
\]
\end{definition}

 Since, intuitively, the nature of an epistemic graph is to represent a (possibly partial) slice of reality, even when it is not explicitly recognized as an independent world, we define \textbf{every subset of $r$ as epistemic}, regardless of whether it belongs to $W$ or it is explicitly labelled as epistemic through $\tau$. 

 The reality graph, and as a consequence all epistemic graphs, are assumed to be internally consistent.

\bigskip
DEC defines satisfaction as a relation between a DEC-interpretation and an arbitrary RDF graph.

\begin{definition}[DEC satisfaction]\label{def:dec-satisfaction}
Let $J=(I,W,N,\iota,\gamma,\tau)$ be a DEC-interpretation with reality graph $r$, and let $g$ be any RDF graph. We say that $J$ \emph{DEC-satisfies} $g$, written $J \models_{\mathrm{DEC}} g$, if:
\begin{itemize}
\item $I$ is an RDF interpretation for $g$; and
\item if $g \subseteq r$, then $I \models_{\mathrm{RDF}} g$.
\end{itemize}
\end{definition}

No further condition is imposed. In particular, graphs that are not subsets of $R$ impose no RDF truth conditions: only epistemic ones do. As an immediate consequence, DEC satisfaction collapses to RDF satisfaction of the sole reality graph.

\begin{theorem}\label{thm:dec-satisfaction-reduces-to-rdf}
Let $J=(I,W,N,\iota,\gamma,\tau)$ be a DEC-interpretation with reality graph $r$. Then
\[
    \left(\forall g \in \mathcal{G},\; J \models_{\mathrm{DEC}} g\right)
    \Longleftrightarrow
    I \models_{\mathrm{RDF}} r.
\]
\end{theorem}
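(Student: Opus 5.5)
The plan is to prove the two directions of the biconditional separately. The only RDF fact needed is the subgraph property of RDF~1.2 semantics: if $I \models_{\mathrm{RDF}} G$ and $G' \subseteq G$, then $I \models_{\mathrm{RDF}} G'$.

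First, I will dispose of the first clause of Definition~\ref{def:dec-satisfaction}, "$I$ is an RDF interpretation for $g$". By Definition~\ref{def:dec-interpretation}, $I$ is an RDF~1.2 interpretation. RDF~1.2 interpretations assign denotations to all IRIs and literals and are not tied to a particular graph's vocabulary. So I will read this clause as holding for every $g \in \mathcal{G}$ and make that reading explicit. If it were instead a genuine vocabulary condition, the left-hand side could fail for reasons unrelated to $r$, and the statement would need the corresponding hypothesis. Under this reading, $J \models_{\mathrm{DEC}} g$ reduces to the implication $g \subseteq r \Rightarrow I \models_{\mathrm{RDF}} g$.

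($\Rightarrow$) Assume $J \models_{\mathrm{DEC}} g$ for every $g \in \mathcal{G}$. Instantiate at $g = r$, which is legitimate because $r$, a union of RDF graphs by Definition~\ref{def:reality-graph}, is itself a set of triples and hence in $\mathcal{G}$. Since $r \subseteq r$ trivially, the second clause gives $I \models_{\mathrm{RDF}} r$.

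($\Leftarrow$) Assume $I \models_{\mathrm{RDF}} r$ and let $g$ be arbitrary. I will split into two cases.
\begin{itemize}
\item If $g \not\subseteq r$, the second clause is vacuously true, so $J \models_{\mathrm{DEC}} g$.
\item If $g \subseteq r$, I invoke the subgraph property. By the RDF~1.2 definition, $I \models_{\mathrm{RDF}} r$ means there is a blank-node mapping $A$ such that $[I+A]$ makes every triple of $r$ true. The restriction of $A$ to the blank nodes of $g$ then makes every triple of $g$ true, so $I \models_{\mathrm{RDF}} g$.
\end{itemize}

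I expect the main obstacle to be this blank-node point, because $r$ is formed by set union rather than RDF merge. If two epistemic worlds share a blank node, the union identifies those nodes, and $r$ differs from the merge of the epistemic graphs. For the theorem as stated this is harmless: the argument only ever uses $r$ as a single set of triples, and the subgraph property holds for set inclusion regardless of how the blank nodes arose. I will add a remark that replacing union by merge would change which $g$ count as subsets of $r$, but not the shape of the proof.
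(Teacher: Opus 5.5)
Your proof is correct and follows the paper's argument: instantiate at $g = r$ (using $r \subseteq r$) for the forward direction, and for the converse split on $g \subseteq r$ versus $g \not\subseteq r$, using monotonicity of RDF satisfaction under subgraphs in the first case and vacuity in the second. Your explicit reading of the clause ``$I$ is an RDF interpretation for $g$'' and your blank-node-mapping justification of the subgraph property fill in details that the paper's proof leaves implicit.
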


\begin{proof} \
\begin{itemize}
    \item[$\Rightarrow$] Assume that $J \models_{\mathrm{DEC}} g$ for every RDF graph $g$ in $\mathcal{G}$. Since $r \subseteq r$, the definition of DEC satisfaction implies $I \models_{\mathrm{RDF}} r$.
    \item[$\Leftarrow$] Assume $I \models_{\mathrm{RDF}} r$. Let $g$ be any RDF graph in $\mathcal{G}$. By monotonicity of RDF satisfaction, an interpretation satisfying a graph also satisfies any of its subgraphs. Thus, if $I \models_{\mathrm{RDF}} r$ and $g \subseteq r$, we conclude that $I \models_{\mathrm{RDF}} g$ and therefore $J \models_{\mathrm{DEC}} g$. If $g \not\subseteq r$, DEC satisfaction holds vacuously.
\end{itemize}
\end{proof}

Thus, only epistemic worlds participate in the interpretation of $J$, and none of the other worlds contribute to it nor prevent it in any form. In fact, worlds that are not epistemic may be mutually incompatible, may conflict with reality, or may even be internally inconsistent, without affecting $r$ or the validity of a DEC-interpretation.

\subsection{Intensionality, Identity, and the Superman Paradox}\label{sec:superman}

A well-known difficulty in reasoning about statements enriched with contextual or provenance information is the so-called \textit{Superman paradox}. The paradox arises from
apparently innocuous identity claims, often exemplified by the well-known identity between Superman and Clark Kent, when combined with cognitive operators. The most frequent form of the paradox involves Superman/Clark Kent's love interest Lois Lane, who knows that Superman can fly but is unaware of the identity of the two personalities: 

    \captionsetup{type=listing}
    \begin{Verbatim}
    
    :superman owl:sameAs :clarkkent.
    :loislane :thinks << :superman :can :fly >>.}
    \end{Verbatim}
    \captionof{listing}{The usual Superman paradox.}
    \label{lst:superman-setup}

The paradox stems from the unacceptable inference derivable from these assertions, that 

    \begin{Verbatim}

    :loislane :thinks << :clarkkent :can :fly >>.

    \end{Verbatim}
    \label{lst:superman-unacceptable-inference}

Historically, the issue is rooted in the principle often attributed to Leibniz and
known as \textit{uniform substitution} or \textit{substitutivity of identicals}: if two terms are identical, then in any formula one may be substituted for the other without changing the formula's truth value. Thus any formula that contains a reference to \texttt{:superman} can be rewritten into another formula containing a reference to \texttt{:clarkkent}.  In classical extensional logics, this principle is treated as a global metalinguistic rule. 

Yet, modal logics, and in particular cognitive modal logics, depart from this setting. Truth is evaluated relative to worlds, and inference is constrained by modal operators. As a result, uniform substitution is no longer a global principle but an internal rule of the base logic, subject to the \textit{locality} enforced by the modal structure and specifically by axiom \textbf{K}: identity does not freely cross modal boundaries, and what is true or valid in one cognitive stance does not automatically propagate to others.

Without additional principles, such as explicit permeation rules, identity statements remain confined to the worlds in which they are asserted. The Superman paradox thus does not signal a failure of reasoning, but rather the necessity of distinguishing between extensional truth and intensional evaluation.

In Semantic Web discussions, concerns about the right choice between extensionality and intensionality have historically motivated caution toward reasoning over quoted or reified statements, and have influenced the limited uptake of full reasoning in approaches such as classical reification and, more recently, embedded triples. The fear is that allowing reasoning over statements about statements would reintroduce such types of paradoxes of substitution and identity. 

As implied in the previous example, in OWL identity is expressed through the property \texttt{owl:sameAs}. Yet, an analogous and symmetric example could be devised using \texttt{rdfs:subClassOf} instead\footnote{According to Wikidata, \\
\texttt{wd:Q188784 wdt:P279 wd:Q7643449.    \ \ \# superhero subclass of superhuman \\
wd:Q7643449 wdt:P279 wd:Q15632617. \# superhuman subclass of fictional human}
}, yielding:

    \captionsetup{type=listing}
    \begin{Verbatim}
    
    :superhero rdfs:subClassOf :fictionalHuman.
    :loislane :thinks << :superman a :superhero >>.
    \end{Verbatim}
    \captionof{listing}{A different Superman paradox.}
    \label{lst:superman-subclass-setup}

, which would never be proposed to justify

    \begin{Verbatim}

    :loislane :thinks << :superman a :fictionalHuman >>.

    \end{Verbatim}
    \label{lst:superman-unjustified-inference}
        
In OWL, \texttt{owl:sameAs} is an ontological predicate, not a meta-rewriting principle that can be applied outside of ontological reasoning: its effect is a plain consequence of the chosen semantics \textit{within OWL}. This design choice places the predicate inside the base logic (OWL, indeed), and once a modal or provenance-aware semantics is adopted, there is no requirement for the effects of \texttt{owl:sameAs} to be global or context-free. On the contrary, treating it intensionally is fully consistent with its status in OWL as an ontological predicate, in the same way that \texttt{rdfs:subClassOf} is an ontological predicate in both RDFS and OWL. 

In short, there are no differences in the permeation power of \texttt{owl:sameAs}, \texttt{rdfs:subClassOf} or any other ontological predicate of RDFS or OWL. Adopting intensional reasoning in cognitive modal logics does not weaken identity, nor does it undermine Semantic Web reasoning, but supports a strict locality paradigm: identity assertions, like all other statements, are evaluated relative to the cognitive worlds they appear in. Whether and how they propagate to reality or to other worlds depends only on the explicit permeation axiom in force. The
intensional treatment of identity thus provides a principled response to the Superman paradox and clarifies why provenance-aware reasoning can be both expressive and formally well-behaved.

\subsection{Settlements, disagreements, delusions, and opinions} \label{subsec:disagreements}
In addition to epistemically characterized cognitive worlds, we outline here a few relevant situations that become representable once provenance-homogeneous PESs are treated as cognitive worlds: \emph{settlements}, \emph{disagreements}, \emph{delusions}, and \emph{personal opinions}. With DEC these are meant as emergent phenomena definable in terms of locality, permeation, and consistency with respect to the reality graph $r$.

\paragraph{Settlements.}
As introduced in Section~\ref{sec:modallogics}, the dynamic operator \emph{settle} captures the transition from conjectural commitment to factual establishment. In the RDF setting this corresponds to promoting selected statements from a conjectural world into the epistemic layer (and therefore into $r$), typically as the result of external validation, consensus, or institutional acceptance. Importantly, settlement is not a data-cleaning operation that deletes competing capta: it is a change in their epistemic status. The conjectural world remains available (as a record of the hypothesis and of its provenance), while its settled fragment becomes part of the asserted factual core and participates in standard entailment.

\paragraph{Disagreements.}
DEC makes it possible to represent mutually incompatible theories without forcing the dataset itself to become inconsistent, because DEC satisfaction depends only on $r$. Two cognitive worlds $w_1$ and $w_2$ are said to be in \emph{disagreement} when their union cannot be jointly maintained under the chosen underlying entailment regime (e.g., OWL, RDFS, or a selected rule set). Intuitively, disagreement means that the two theories cannot both be correct at the same time, even if neither is asserted as factual. This provides two benefits: (i) incompatible claims can coexist in the same dataset without triggering trivialization of inference, and (ii) disagreement can be \emph{discovered} rather than declared, by checking which combinations of worlds yield incoherence. Uncertainty can be treated as an emergent property of a dataset, that is made evident by the structure of incompatibilities among available cognitive worlds.

\paragraph{Delusions.}
When disagreement involves $r$ itself, the conflict is stronger: it is not merely that two theories are incompatible, but that a world is incompatible with what the dataset currently treats as established. We call this situation a \emph{delusion}. Formally, a world is delusional (with respect to $r$) when adding its content to $r$ yields inconsistency. This notion matches the common intuition that some cognitive positions are not just alternative hypotheses but are, given what is already established, no longer acceptable.

Delusion also supports a key aspect of provenance-enhanced statements: the truth of the attribution and the truth of the attributed content are independent. As mentioned, the statement ``\textit{According to Archbishop Ussher, the Earth was created in 4004 BC}'' is a correct attribution of a false statement; the embedded claim is delusional with respect to $r$ (i.e., considered false) while the attribution is actually true. 

\paragraph{Personal opinions.}
DEC distinguishes \emph{personal opinions} from both disagreements and delusions. Personal opinions are cognitive worlds whose content is, by design, detached from any reasonably checkable reality: they are claims for which $r$ is meant to provide neither support nor refutation.
Typical cases are preferences, tastes, and private evaluations (e.g., ``\textit{David believes that pizza is better than hot dogs}'').
Such statements are not \textit{uncertainties} waiting for settlement one way or another, but evaluations whose commitments are personal or community-specific. In other words, they may be perfectly meaningful and comparable across agents, but not with reality or the external world.

DEC treats personal-opinion worlds as \emph{irreconcilable} with the factual core. Without this distinction, worlds that fail to be supported by $r$ could be confused with delusional ones.
In fact, incompatibilities between opinions are not, by default, to be interpreted as epistemic conflicts, and they only become so when we explicitly decide to compare such predicates with the established reality.

Two incompatible preferences may create a disagreement but not a delusion (with respect to $r$):
\begin{listing}[H]
\begin{Verbatim}
    :alice :believes << :pizza :betterThan :hotdogs >>.
    :bob   :believes << :hotdogs :betterThan :pizza >>.

    :believes a dec:doxasticPredicate.
\end{Verbatim}
\caption{Personal opinions are not delusions.}
\label{lst:opinions-preferences}
\end{listing}
The two worlds can be compared, clustered, or analysed sociologically, but there is no expectation that a DEC processor should try to ``repair'' them against $r$, nor that their incompatibility should count as a delusional conflict.

\section{A reasoner for cognitive worlds} \label{sec:decreasoner}

While modal logics provide the theoretical foundation for reasoning about subjectivity, they do not replace the existing inferential structures already defined within the Semantic Web stack. RDF, RDFS, and OWL constitute a well-established hierarchy of entailment regimes, each grounded in model-theoretic semantics. These regimes allow machines to infer subclass relations, property hierarchies, transitive connections, domain and range implications, and class memberships, all under the assumption of a single coherent world of discourse.

In this section we show how to embed it within a richer cognitive framework. Cognitive worlds exist, are identified and retain their own internal consistency, and continue to rely on standard RDFS/OWL reasoning for the local derivation of facts. Cognitive modal logics act as a meta-reasoning layer above traditional RDF inference: agents reason over structured knowledge expressed in OWL and RDFS, but the truth of what they assert is modally qualified. Our DEC Reasoner exploits this separation by performing standard entailment inside each graph using existing reasoning models, while a distinct component interprets the governing predicates to manage inter-world accessibility, coherence, and disagreement.

The DEC framework has been implemented as a prototype reasoner for RDF datasets.\footnote{The demonstrator is available at \url{https://w3id.org/conjectures/dec}. The DEC reasoner source code is available at \url{https://w3id.org/conjectures/dec/code/reasoner}, and the DEC viewer source code at \url{https://w3id.org/conjectures/dec/code/viewer}.} The implementation is built on Apache Jena Fuseki 5 and exposes an ordinary SPARQL endpoint.

The DEC reasoner recognises cognitive worlds, applies the appropriate permeation policies, materialises the resulting graphs, and delegates ordinary RDF, RDFS, or OWL inference to the configured backend reasoner. The reasoner acts at a level that is orthogonal to domain inference. 

It does not introduce new interpretations for RDF terms, nor does it assign special meaning to ontological predicates such as \texttt{rdfs:subClassOf}, \texttt{owl:sameAs}. These predicates are still interpreted by the underlying Semantic Web stack, while the DEC reasoner only determines where a statement is available, whether it contributes to reality, and whether its consequences remain confined to a cognitive world. Furthermore, it introduces a handful of domain-specific properties for DEC-related services. 

The reasoner is activated at dataset level. When DEC reasoning is inactive, the dataset behaves ordinarily. When it has been activated, all insertions and updates trigger a special DEC processing phase before the dataset is exposed to query answering.

\subsection{Cognitive worlds and predicates}
When inserting new triples the reasoner first identifies the cognitive worlds contained in the dataset and assigns a DEC type to each world (either from an explicit declaration or from a governing predicate). When a query is evaluated, it applies the permeation rules associated with the detected world types duplicating triples as specified in the cognitive type of the world. The reasoner then lets the configured RDF, RDFS, or OWL reasoner work over the permeated graphs as usual. Thus, DEC does not replace the underlying reasoner, but simply prepares the graphs on which the underlying reasoner must operate, permeating triples as needed.

The implementation recognises cognitive worlds in three RDF representation styles: named graphs, RDF-star quoted triples, and classical reification. The syntactic form changes how worlds are recovered from the dataset, while their DEC behaviour depends just on the epistemic typing.

With named graphs, the graph itself may be typed directly\footnote{Here and in the following the \texttt{dec:} prefix is associated to the URI \texttt{http://w3id.org/conjectures/}}:

\begin{listing}[H]
\begin{Verbatim}
    :alice :knows :MobyDickFacts .
    GRAPH :MobyDickFacts {
        :mobyDick rdf:type :Whale .
    }
    :MobyDickFacts rdf:type dec:epistemicWorld .
    :Whale rdfs:subClassOf :Mammal .
    
\end{Verbatim}
\caption{An epistemic world as a named graph.}
\label{lst:epistemic-world-named}
\end{listing}

The same world type may also be induced by characterizing a governing predicate:

\begin{listing}[H]
\begin{Verbatim}
    :alice :knows << :mobyDick rdf:type :Whale >>
        
    :knows rdf:type dec:epistemicPredicate .
    :Whale rdfs:subClassOf :Mammal .
    
\end{Verbatim}
\caption{An epistemic world induced by a direct governing predicate over a quoted triple .}
\label{lst:epistemic-world-direct}
\end{listing}

In this second case, an epistemic world is implicitly generated because \texttt{:knows} has been designed as an epistemic predicate. This is particularly useful to generate cognitive worlds containing sets of statements using quoted triples or reifications, as in this example. The mechanism is useful when the same predicate is repeatedly used to introduce worlds of the same kind. Indeed, when RDF-star or reification are used, the reasoner reconstructs worlds from governing predicates and agents only.

Reverse characterization for predicates having the quoted triple as the subject is also provided. This is useful for vocabularies such as PROV-O \cite{prov-o}, where some attribution patterns are naturally expressed from the statement, graph, or derivation towards the source. DEC supports reverse cognitive predicates without changing the underlying epistemic classification:

\begin{listing}[H]
\begin{Verbatim}
<< :mobyDick rdf:type :Whale >> prov:wasAttributedTo :melville. 
        
prov:wasAttributedTo rdf:type dec:reverseEpistemicPredicate .
:Whale rdfs:subClassOf :Mammal .

\end{Verbatim}
\caption{An epistemic world induced by a reverse governing predicate over a quoted triple .}
\label{lst:epistemic-world-reverse}
\end{listing}

The DEC reasoner supports a fixed set of DEC world and predicate types, summarised in Table~\ref{tab:cognitive-worlds}.

\begin{table}[ht]
\centering
\small
\renewcommand{\arraystretch}{1.5}
\begin{tabular}{@{}%
	p{0.36\linewidth}
	p{0.13\linewidth}
	p{0.46\linewidth}
@{}}
\hline
\textbf{Available classes} & \textbf{DEC type} & \textbf{Description} \\
\hline
\texttt{dec:verbatimWorld}\newline
\texttt{dec:verbatimPredicate}\newline
\texttt{dec:reverseVerbatimPredicate}
&
\textit{verbatim world}
&
Isolated from the other worlds and no reasoning takes place inside them. \\
\hline
\texttt{dec:sharedWorld}
\texttt{dec:sharedPredicate}\newline
\texttt{dec:reverseSharedPredicate}
&
\textit{shared world}
&
Contains in a single locations triples that are part of every cognitive world of the DEC universe (except \textit{verbatim worlds}). \\
\hline
\texttt{dec:doxasticWorld}\newline
\texttt{dec:doxasticPredicate}\newline
\texttt{dec:reverseDoxasticPredicate}
&
\textit{doxastic world}
&
Isolated from the other worlds but reasoning takes place locally.\\
\hline
\texttt{dec:epistemicWorld}\newline
\texttt{dec:epistemicPredicate}\newline
\texttt{dec:reverseEpistemicPredicate}
&
\textit{epistemic world}
&
Isolated from the other worlds but their statements permeate into reality. Reasoning takes place separately inside them and in reality.\\
\hline
\texttt{dec:conjecturalWorld}\newline
\texttt{dec:conjecturalPredicate}\newline
\texttt{dec:reverseConjecturalPredicate}
&
\textit{conjectural world}
&
Isolated from the other worlds but facts permeate into them. Reasoning takes place separately inside them.\\
\hline
\texttt{dec:delusionalWorld}
&
\textit{delusional world}
&
Contains at least one statement directly contradicting an asserted fact. This is usually an emergent classification after a consistency check.\\
\hline
\end{tabular}
\caption{Cognitive world and predicate types supported by the DEC reasoner.\label{tab:cognitive-worlds}}
\end{table}

 If no explicit DEC typing is given, the implementation applies conservative defaults in order to simulate the behavior of non-DEC reasoners. For classical reification and RDF 1.2 quoted triples, the default type is a plain verbatim world, blocking both permeation and reasoning on reified and quoted triples, as per \cite{w3cAttributions2001, CLK26}.
For named graphs, it is well known that RDF does not impose semantics: graph names are syntactically paired with graphs, and RDF semantics does not require them to denote the graphs they identify \cite{Zim14}. The DEC default therefore follows a common operational approach used by several RDF/OWL stores: the default DEC type for named graph treats it as an isolated graph with local inference (i.e., a doxastic world), while the so-called \textit{union graph} behavior offered by some RDF stores can be partially simulated by pairing unspecified named graphs with epistemic worlds. In this case their contents and consequences become factual, but while DEC preserves the epistemic origin of the materialised statements, union graph reasoning collapses generated statements in a single graph, usually the default graph.

\bigskip

The main operational characterization of a DEC world type is produced by its permeation behaviour. Two special cognitive worlds are relevant for determining the content and the permeation of the other worlds in the dataset: the shared world and the reality world. 

The shared world is defined as the union of all worlds defined as \texttt{dec:sharedWorld}, plus all the worlds referenced by governing predicates of type \texttt{dec:sharedPredicate} for direct references and \texttt{dec:reverseSharedPredicate} for reverse ones. All non-verbatim worlds permeate from the shared world. The shared world is meant as a utility structure containing background statements expressed just once but available inside all non-verbatim worlds and reality. Shared worlds are useful for taxonomies, common individuals, and modelling assumptions that would otherwise have to be repeated in each cognitive world of the universe.

The reality world is defined as the union of all worlds defined as \texttt{dec:epistemicWorld}, plus all the worlds referenced by governing predicates of type \texttt{dec:epistemicPredicate} for direct references and \texttt{dec:reverseEpistemicPredicate} for reverse ones. The default graph belongs to reality inasmuch as it is defined as an epistemic world itself (which is in fact the default). Also in reality is the shared world, permeated here as well as in all other non-verbatim worlds. 
Reality is therefore represented by a distinguished graph produced from ordinary factual assertions (the default graph), shared content, and the contribution of epistemic worlds. The DEC reasoner refers to the constructed reality graph with the standard identifier \texttt{dec:reality}.

Verbatim worlds are strict containers of assertions. They preserve the attributed content exactly as reported: no inference is performed inside them, and no permeation connects them with reality, with other worlds or even with shared knowledge. This makes verbatim worlds appropriate for exact quotations, inscriptions, documentary statements, or any case where the system must record exactly what was said or written without deriving consequences.

Doxastic, epistemic and conjectural worlds receive permeation according to a strict set-theoretical approach: doxastic and epistemic worlds receive shared triples; shared and epistemic worlds are used to generate the reality world; conjectural worlds receive reality triples, and their content does not permeate into reality. 

During the input of new quads, the reasoner keeps track and updates the declared worlds, governing predicates and associated DEC types. 
The current implementation materialises triples across permeating worlds during queries and then lets the base reasoner create the new knowledge separately in each cognitive world. So, for instance, any of the listings \ref{lst:epistemic-world-named}, \ref{lst:epistemic-world-direct} or \ref{lst:epistemic-world-reverse} generates the following new facts once the permeated worlds are subjected to a standard RDFS or OWL reasoning engine: 

\begin{listing}[H]
\begin{Verbatim}
:mobyDick rdf:type :Mammal.
:mobyDick rdf:type :Whale.

\end{Verbatim}
\caption{Permeation and reasoning from an epistemic world.}
\label{lst:permeation-epistemic-example}
\end{listing}

Updates are handled incrementally: the reasoner updates the affected worlds instead of recomputing all consequences at query time. This choice makes the endpoint easy to query and inspect, since SPARQL sees the resulting triples directly. It also exposes an implementation cost, especially with large numbers of worlds and especially of conjectural worlds: the same shared and permeated statements will be repeated in many worlds, increasing the computational costs of deriving new triples.

\subsection{Disagreements and delusions}\label{subsec:reasoner-conflicts}

The DEC reasoner can also detect inconsistencies between worlds (\textit{disagreements}) and between a world and reality (\textit{delusions}). This detection is optional, because the relevant checks may be expensive and because many datasets only need permeation and local reasoning. Thus, it is performed when requested by explicit DEC instructions in the dataset.

Our implementation provides two mechanisms for this detection, \texttt{dec:checkInconsistencies}, which works through the underlying reasoning engine, and \texttt{dec:closedFor}, handled directly by the DEC reasoner. They can be both activated at the same time, if needed. 

\bigskip
The \texttt{dec:checkInconsistencies} boolean predicate activates inconsistency checking through the configured underlying reasoner (in our case, the Fuseki OWL reasoner):

\captionsetup{type=listing}
\begin{Verbatim}

GRAPH :shared { 
    :mobyDick rdf:type :Whale . 
    :Mammal owl:disjointWith :Fish.
} 

GRAPH :aliceKnows { 
    :Whale rdfs:subClassOf :Mammal . 
}
:alice :knows :aliceKnows .

GRAPH :bruceBelieves { 
    :Whale rdfs:subClassOf :Fish . 
} 
:bruce :believes :bruceBelieves .

:shared rdf:type dec:sharedWorld .
:knows rdf:type dec:epistemicPredicate .
:believes rdf:type dec:doxasticPredicate .
[] dec:checkInconsistencies true.

\end{Verbatim}
\captionof{listing}{A request for inconsistency check.}
\label{lst:checkinconsistencies-input}

With this option, the DEC reasoner constructs relevant combinations of worlds and asks the underlying base reasoner whether their union is satisfiable. If two worlds cannot be jointly satisfied, the reasoner materialises a \texttt{dec:disagreesWith} relation between them. If a world disagrees with any world participating in reality (e.g., epistemic worlds and/or the default graph), the world is also typed as \texttt{dec:delusionalWorld}. 

For instance, listing \ref{lst:checkinconsistencies-input} specifies (as shared knowledge) that Moby Dick is a whale and mammals are disjoint from fishes, that (epistemically) Alice  knows that whales are mammals, and that (doxastically) Bruce believes that whales are fishes. 
This generates, among others, the following inferences: 

\captionsetup{type=listing}
\begin{Verbatim}

GRAPH :aliceKnows { 
    :mobyDick rdf:type :Mammal. 
}
GRAPH :bruceBelieves { 
    :mobyDick rdf:type :Fish. 
}

:mobyDick rdf:type :Whale.
:mobyDick rdf:type :Mammal.

:bruceBelieves dec:disagreesWith :aliceKnows.
:bruceBelieves dec:disagreesWith dec:reality.
:aliceKnows rdf:type dec:epistemicWorld.
:bruceBelieves rdf:type dec:doxasticWorld.
:bruceBelieves rdf:type dec:delusionalWorld.

\end{Verbatim}
\captionof{listing}{Inferred consequences of Listing~\ref{lst:checkinconsistencies-input}.}
\label{lst:checkinconsistencies-output}

The \texttt{dec:checkInconsistencies} mechanism depends on the configured inference regime, which in this case is OWL. While this approach is semantically general,  it still requires conflicts to be expressible as logical inconsistencies in the hosting ontology, i.e., it requires the dataset to contain enough explicit background knowledge to let these inconsistencies emerge: in this example, it is necessary to point out that mammal and fish classes are disjoint. 

\bigskip
The second mechanism, \texttt{dec:closedFor}, is specific to DEC and is based on explicit closure constraints. It addresses cases where disagreement is epistemically meaningful even if ordinary RDF or OWL reasoning would be too complicated or unable to produce an inconsistency. 

Suppose for instance we need to represent multiple theories about the attribution of a work of art, as in listing \ref{lst:hamlet}: 

\begin{listing}[ht]
\begin{Verbatim}
GRAPH :ShakespeareWroteHamlet { 
    :Hamlet dc:creator :Shakespeare. 
}
:scholars :agree :ShakespeareWroteHamlet .

GRAPH :DeVereWroteHamlet { 
    :Hamlet dc:creator :EdwardDeVere. 
}
:ThomasLooney :proposes :DeVereWroteHamlet .

:agree rdf:type dec:epistemicPredicate .
:proposes rdf:type dec:doxasticPredicate .

rdf:subject dec:closedFor dc:creator .
\end{Verbatim}
\caption{Closure-based disagreement on authorship (input dataset).}
\label{lst:hamlet}
\end{listing}

Under the open world assumption, the two worlds do not really disagree, since by considering the two graphs together we would not be able to generate an inconsistency in RDFS or in OWL, as the two \texttt{dc:creator} statements could simply be providing complementary information: e.g. we could deduce that both Shakespeare and De Vere collaborated in writing Hamlet, or that Shakespeare and De Vere were actually the same person. Yet, for our specific modelling purpose, we need to be able to treat separate author assignments as mutually incompatible.
DEC expresses this commitment through the \texttt{dec:closedFor} predicate over specific domain properties. 

A declaration such as \texttt{rdf:subject dec:closedFor dc:creator} says that, for the predicate \texttt{dc:creator}, worlds assigning different objects to the same subject are incompatible: each world is closed to multiple \texttt{dc:creator} assignments over the same subjects. Such closure is local to the specified predicate and to subject/object alternatives, so this assignment neither changes the global semantics of \texttt{dc:creator}, nor, importantly, does it make RDF closed-world in general.

Without resorting to the host reasoner, thus, the DEC reasoner is able to generate a few new triples: 

\begin{listing}[ht]
\begin{Verbatim}
:DeVereWroteHamlet dec:disagreesWith :ShakespeareWroteHamlet .
:DeVereWroteHamlet dec:disagreesWith dec:reality .

:ShakespeareWroteHamlet rdf:type dec:epistemicWorld .
:DeVereWroteHamlet rdf:type dec:doxasticWorld .
:DeVereWroteHamlet rdf:type dec:delusionalWorld .
\end{Verbatim}
\caption{Closure-based disagreement on authorship   .}
\label{lst:closedfor-authorship-output}
\end{listing}

The graph \texttt{:ShakespeareWroteHamlet} is epistemic because it is governed by \texttt{:agree}; it therefore contributes its authorship assignment to reality. The graph \texttt{:DeVereWroteHamlet} is doxastic because it is governed by \texttt{:proposes}. Since \texttt{dc:creator} has been closed for subject/object alternatives, the two worlds disagree, as they assign different creators to the same work. But since the Shakespeare attribution is epistemic, it permeated reality and therefore the De Vere attribution also disagrees with reality and is thus marked as delusional.

Yet, importantly, the DEC Reasoner does not erase the De Vere attribution, which is then preserved as a belief with explicit metadata describing its relation to the factual layer. A query can therefore retrieve the historical attribution, its proponent, its cognitive status, and its incompatibility with the settled attribution. This is precisely the kind of situation for which mere provenance is too weak: with DEC not only all attributions remain part of the dataset, but their epistemic positions become queryable.

\bigskip

As shown by our prototype, the DEC model can be implemented as a practical layer over existing Semantic Web infrastructure. Its main operation is graph construction: to recognise epistemically typed worlds, to materialise the permeations required by their type, to invoke the configured reasoner over the resulting graphs, and to record optional conflict annotations when requested. The semantics of DEC does not really require complete materialisation:  future implementation could compute consequences lazily and locally, and cache them more selectively. The present prototype materialises all permeated triples upon executing queries, which makes its behaviour inspectable, but at the cost of multiplying the same triples when many worlds share the same background. 

\section{Related works} \label{sec:relatedworks}

Handling provenance-enriched statements, with different approaches and different foci, has a rich and long history in the literature of the Semantic Web and neighboring fields. We specifically examine three topics: logical models for contextual reasoning, structural mechanisms for representing statements with provenance, and models to associate epistemic characterization to statements. Our main interest in this evaluation lies specifically in verifying the relation between an assertion as represented in the dataset and its factual commitment.

We first examine how contextualized knowledge is considered as part of the logical setting in which inference takes place. These approaches are primarily concerned with enabling reasoning over heterogeneous knowledge bases or datasets whose contents are locally organized and context-dependent. Their focus is to preserve local semantics (assumed as valid) and regulate interoperability across sources. Early contextual approaches to ontologies, such as C-OWL, distinguish shared ontological models from local contextual views and connect them by explicit context mappings \cite{BouquetGiunchigliaVanHarmelenSerafiniStuckenschmidt2003COWL}, while in \cite{JosephKuperMossakowskiSerafini2016BridgeRules} contextualized RDF and OWL knowledge bases use explicit bridge rules to reason across locally separated bodies of knowledge.
Others consider contexts themselves as first-class objects in contextualized knowledge. For instance, in \cite{KlarmanGutierrezBasulto2011TwoDimensionalDL} a two-dimensional description logic distinguishes the ordinary object domain from the domain of viewpoints or contexts. OWLC \cite{AljalboutBuchsFalquet2019OWLC} similarly separates context-dependent knowledge from knowledge about contexts and extends OWL-style reasoning with contextual rules. Contexts are seen as semantic objects rather than simply storage partitions, and afford their own context-oriented reasoning.
Finally, annotation-oriented Semantic Web frameworks attach annotations \cite{ZimmermannLopesPolleresStraccia2012AnnotatedSemanticWebData}, or "colors" \cite{FlourisFundulakiPediaditisTheoharisChristophides2009ColoringRDF}, to statements and define how such values propagate through inference: derived statements inherit, combine, or transform the context of their premises, but such annotations are still considered as neutral sources or abstract annotation values rather than epistemically precise characterizations of the statements. 

Next, we concern ourselves with the representational mechanisms used to express statements with provenance. Classical RDF reification and related meta-statement approaches make statements available as objects of discourse, but at the cost of verbosity and unclear inferential behavior \cite{YangKifer2003AnonymousResourcesMetaStatements}. N-ary relation patterns introduce intermediate entities to represent a complex relation including qualifiers, temporal annotations, provenance, or certainty \cite{HayesWelty2006NaryRelations}. Four-dimensional and fluent-based approaches also multiply the entities involved in a relation, introducing e.g., time slices, contextual parts, or relators so as to preserve ordinary predicates within context-dependent truth \cite{WeltyFikesMakarios2006ReusableFluents,BatsakisPetrakis2009TemporalRepresentationOWL,ZamborliniGuizzardi2013TemporallyChangingOWL}. NdFluents generalizes this strategy from time to multiple contextual dimensions \cite{GimenezGarciaZimmermannMaret2017NdFluents}. Singleton properties \cite{NguyenSheth2016LogicalInferencesContextsRDF} multiply predicates instead, creating numerous contextualized triples using multiple context-appropriate predicates over the same entities.  Yet, these approaches are not meant for characterizing the epistemic status of claims, and furthermore create numerous additional entities or predicates for different contextual truths to coexist in the same dataset. Named graphs offer a graph-level alternative for contexts by allowing triples to be grouped, named, separated, cited, imported, and associated with different provenance \cite{CarrollBizerHayesStickler2005NamedGraphs}; yet, named graphs do not have a single natural semantics, and may act as partitions, quoted graphs, isolated contexts, online graphs, or provenance carriers within the same representation approach \cite{ArndtVanWoensel2019MultipleSemanticsNamedGraphs}.
RDF-star and SPARQL-star provide the most recent mechanism for statement-level metadata: quoted triples may now occur directly as subjects or objects of other triples, making it easier to attach provenance, time, certainty, or source information to individual statements \cite{HCKS2021}. RDF officially adopted this approach in its most recent version, 1.2 \cite{CLK26}. In \cite{RuppSchnabelEckert2022MetadataModelingRDFStarNamedGraphs}, both RDF-star and named graphs have been used to distinguish between statement-level and graph-level metadata, while \cite{dibowski2024traceability, AmmannAlassi2024JourneyStar} use RDF-star for triple-level traceability and change provenance, showing the practicality of quoted triples for statement-level attribution. 

Finally, some models do actually characterize the epistemic status of attributed statements. 
For instance, nanopublications package an assertion together with provenance and publication information \cite{GrothGibsonVelterop2010Nanopublication,KuhnDumontier2017Nanopublications}. Micropublications extend this direction by representing scientific claims together with evidence, support, challenge, and argument structure \cite{ClarkCiccareseGoble2014Micropublications}. CIDOC CRM and its extensions, especially CRMinf, provide another model for representing argumentation, inference making, belief adoption, and provenance in cultural heritage contexts \cite{Doerr2003CIDOCCRM,CRMinf}. 
Large knowledge graphs often already contain statements whose factuality is mediated by references, qualifiers, ranks, source claims, or disagreement, as in Wikidata \cite{vrandecic2014, WikidataRanking}. However, Wikidata's modeling conventions do not amount to a general theory of epistemic statuses: referenced statements may be well sourced, deprecated, disputed, or merely reported, but the semantics of these distinctions is only partially formalized \cite{dipasquale2024weakerLogicalStatusWikidata}.
Some works have begun to address this gap more directly. For instance, eSPARQL extends SPARQL-star with epistemic querying over RDF-star metadata, using a four-valued logic \cite{PanHernandezSeiferLaemmelStaab2024ESPARQL}. Here, explicitly, an embedded statement is not automatically expected to be factual: it can be believed true, believed false, believed unknown, or believed conflicting by an individual or group. eSPARQL also supports nested beliefs and queries over conflicts between belief sets. By packaging assertions with evidence and provenance, these models are able to express complex networks of relations and epistemic statutes for basic triples, but a general characterization of the epistemic nature of these assertions is still missing.

DEC addresses this missing layer. It does not replace RDF-star, named graphs, nanopublications, CRMinf, or contextual reasoning frameworks. Instead, it provides a semantic characterization of the cognitive containers in which attributed statements occur. Its main interest is the relation between the presence of a statement in a graph and its factual commitment, characterizing not only where a statement comes from, but also what kind of epistemic world the statement belongs to and how, if at all, it may interact with the factual core.

\section{Conclusions}\label{sec:conclusions}

This paper addressed the semantic gap between the widespread representation of provenance-enhanced statements in knowledge graphs and the lack of a principled account of how such statements relate to factual commitment and to one another. We proposed DEC as a semantic framework that interprets provenance relations as indicators of epistemic stance and organizes provenance-homogeneous bundles of claims into cognitive worlds. By grounding cognitive worlds in a family of cognitive modal logics (doxastic, epistemic, and conjectural) and by making locality and permeation explicit, DEC supports the coexistence of competing perspectives without collapsing them into inconsistency, while still enabling ordinary RDF/RDFS/OWL inference where factual commitment is warranted.

We formalized a DEC interpretation for RDF datasets that remains conservative with respect to standard RDF semantics, and we discussed how intensionality and locality provide a principled resolution of classical substitution problems (including identity-related paradoxes) that have historically limited the use of epistemic reasoning in Semantic Web settings. On this basis, we showed how epistemically meaningful phenomena such as disagreement, delusion, settlement, and personal opinions can be characterized in terms of (i) the world typing induced by governing predicates, (ii) controlled permeation to and from a distinguished reality graph, and (iii) optional conflict checks based either on the host reasoner (logical inconsistency) or on explicit epistemic constraints (closure).

To demonstrate feasibility, we described a prototype DEC reasoner implemented as a supervisory layer on top of Apache Jena Fuseki. The prototype recognises cognitive worlds across different RDF representation styles (named graphs, quoted triples, and reification), applies the permeation policies associated with epistemic types, delegates standard entailment to an underlying reasoner, and can annotate detected conflicts so that they become queryable parts of the dataset. This shows that DEC can be adopted incrementally, without redefining RDF/OWL entailment, and can be integrated into existing infrastructures as a dataset-level service.

Several directions remain open. First, scalability and deployment trade-offs deserve systematic evaluation: conflict detection across many worlds and aggressive materialisation strategies may be costly, and future implementations should explore more selective caching and lazy computation of permeated consequences. Second, richer forms of epistemic constraint beyond predicate-level closure (e.g., constraints scoped to classes, shapes, or contextual patterns) could improve disagreement detection in domains where incompatibility is not reducible to single properties. Third, the present work focused on permeation between each world and reality; extending DEC with principled inter-world permeation policies, temporal evolution of worlds, and explanation facilities for epistemic annotations would further support real-world scholarly and scientific workflows.

\section*{Acknowledgments}\label{sec:ack}
TBD

\bibliography{Content}

\appendix
\section{Appendix: Kripke semantics and cognitive modal logics}\label{app:kripke}

Section~\ref{sec:modallogics} introduces cognitive modal logics axiomatically, in order to keep the focus on the inferential roles of the axioms. This appendix provides the corresponding semantic layer: it is based on the usual Kripke-style evaluation of $\Box_i$, expresses the axioms of normal modal logic in terms of the frame conditions they induce, and justifies axiom \textbf{C} under a non-bivalent reading of the base logic. 

Let $\mathcal{L}_0$ be a chosen base logic (we will implicitly use propositional logic, but the same construction applies to first-order logics and description logics). Let $W$ be a non-empty set of worlds and, for each agent or cognitive stance $i$, let $R_i \subseteq W \times W$ be an accessibility relation.
A \emph{(multi-agent) Kripke frame} is a structure $\langle W,\{R_i\}_{i\in I}\rangle$. 
A \emph{Kripke model} is $\mathcal{M}=\langle W,\{R_i\}_{i\in I},V\rangle$, where $V$ is a valuation for the non-modal language of $\mathcal{L}_0$ at each world.

Since the conjectural setting presupposes that the base logic may leave some formulas \emph{unentailed} (rather than forcing bivalence), we allow the valuation to be non-bivalent. Concretely, we assume that for each world $w\in W$ and each base formula $\varphi\in\mathcal{L}_0$, $V_w(\varphi)$ is a partial function that takes values in $\{\top,\bot\}$. Formulas that do not belong to the inverse image of $V_w$ are said to be \textit{undefined/unentailed} at $w$.

Satisfaction is defined in the usual way for truth:
\[
\mathcal{M},w \models \varphi \quad\text{iff}\quad V_w(\varphi)=\top.
\]

Modal satisfaction uses the standard Kripke clause:
\[
\mathcal{M},w \models \Box_i \varphi
\quad\text{iff}\quad
\text{for all }w' \text{ such that } wR_i w',\ \mathcal{M},w' \models \varphi.
\]
The dual operator $\Diamond_i$ is definable as usual, but it plays no essential role in our cognitive setting. The axioms introduced in normal modal logics correspond to well-known structural constraints on the accessibility relations $R_i$ of the Kripke frame. In particular:

\begin{itemize}

\item \textbf{Axiom K.}  
Axiom \textbf{K} is valid on all Kripke frames and guarantees that implication is preserved under $\Box_i$ within the set of $R_i$-accessible worlds.

\item \textbf{Axiom D.}  
$\forall w \in W \ \exists w' \in W, w R_i w'$. \\
Axiom \textbf{D} corresponds to seriality of $R_i$ and excludes vacuous necessity, enforcing modal consistency.

\item \textbf{Axiom T.}  
$ \forall w \in W, w R_i w$. \\
Axiom \textbf{T} corresponds to reflexivity of $R_i$ and ensures that whatever holds under the modality also holds at the base level of the same world.

\item \textbf{Axiom 4.}  
$\forall w,u,v \in W, 
(w R_i u \wedge u R_i v) \Rightarrow w R_i v$. \\
Axiom \textbf{4} corresponds to transitivity of $R_i$ and guarantees positive introspection: if a formula is accepted under the modality, then its acceptance is itself available under further applications of the same modality.

\item \textbf{Axiom 5.}  
$\forall w,u,v \in W, 
(w R_i u \wedge w R_i v) \Rightarrow u R_i v$. \\
Axiom \textbf{5} corresponds to the Euclidean property of $R_i$ and guarantees negative introspection: whenever a formula fails to be accepted under the modality, this failure is itself accessible and therefore available under further applications of the same modality.
\end{itemize}

In addition, \cite{Vit26} introduces a new axiom, called Axiom \textbf{C}, which is not a frame condition in the classical sense, but a semantic constraint on how accessibility interacts with truth evaluation, and it specifically relies on non-bivalent base logics to avoid modal collapse.

\begin{itemize}
\item \textbf{Axiom C.}  
$\forall w,u \in W, w R_i u \Rightarrow \forall \varphi,  (\mathcal{M}, w \models \varphi \Rightarrow \mathcal{M}, u \models \varphi) $. \\
\textbf{C} is a \emph{truth-preservation condition} along $R_i$: any formula that is true at a world must remain true in all $R_i$-accessible worlds. 
\end{itemize}

Soundness and completeness of the system \textbf{KC} are established within a semantic framework specifically designed for conjectural reasoning over non-bivalent base logics. Rather than relying on standard frame-based correspondence alone, the analysis makes essential use of truth-preservation along accessibility relations,  capturing the intended cognitive reading of axiom \textbf{C}.

\bigskip
A logical system is sound if every formula derivable in the system is valid in all models of its intended semantics, and it is complete if every formula that is valid in all models of its intended semantics is derivable in the system.

\paragraph{Soundness.}
Soundness of \textbf{KC} follows directly from the semantic interpretation of the modal operator. Axiom \textbf{K} is sound under the standard Kripke clause for $\Box_i$, independently of any constraint on $R_i$. Axiom \textbf{C} is sound provided that accessibility preserves established truths of the base logic: whenever a base formula is true at a world, it remains true at all $R_i$-accessible worlds. Under this reading, every inference licensed by \textbf{KC} preserves truth in all cognitive models satisfying the corresponding persistence condition.

\paragraph{Completeness.}
Completeness of \textbf{KC} is proved by constructing a canonical cognitive model tailored to the conjectural setting. Worlds in the canonical model represent coherent cognitive states of the base logic, allowing for formulas to be unentailed rather than forced to be true or false. The accessibility relation is defined so as to preserve all established base truths, thereby enforcing axiom \textbf{C} at the semantic level.

\bigskip

Thus, any modal formula not derivable in \textbf{KC} fails at some canonical cognitive world, while every derivable formula is valid in all such worlds. This establishes that \textbf{KC} is both sound and complete with respect to the class of cognitive models in which accessibility is interpreted as truth-preserving informational extension, rather than mere frame-based reachability.

Adding axioms \textbf{D}, \textbf{4}, and \textbf{5} preserves soundness and completeness, as they impose only standard relational constraints on $R_i$ and do not interfere with the truth-preservation condition required by axiom \textbf{C}. Consequently, systems \textbf{KDC} and \textbf{KDC45} remain sound and complete. Detailed proofs can be found in \cite{Vit26}. 

\paragraph{Accessibility as information growth in non-bivalent cognitive semantics}

In cognitive settings we interpret accessibility as growth of information. Intuitively, $w R_i w'$ means that, from the perspective of agent $i$, $w'$ is obtained from $w$ by adding determinations for formulas that were previously undefined, while preserving what was already established in $w$.

In the following, $\mathrm{defined}$ is the set of all defined formulas of a world, or
\[
\mathrm{defined}(w)=\{ \varphi \mid V_w(\varphi)\in \{\top,\bot\} \ \} 
\]

We make this concept precise, by expressing two coherence principles that do not presuppose bivalence.

\begin{itemize}
\item \emph{Monotonicity of definedness} (MD): if $w R_i w'$ then $\mathrm{defined}(w)\subseteq \mathrm{defined}(w')$. No formula that was defined at $w$ becomes undefined at $w'$.
\item \emph{Persistence of established truths} (PT): if $w R_i w'$ and $V_w(\varphi)=\top$, then $V_{w'}(\varphi)=\top$. Truths cannot be lost along $R_i$.
\end{itemize}

MD captures the idea that accessibility adds information rather than removing it. PT captures the idea that accessibility is conservative over what is already accepted as true. Together they characterize accessibility as an \emph{extension} of the informational state. In particular, MD permits $V_w(\varphi) \notin \{\top,\bot\}$ and $V_{w'}(\varphi)\in \{\top,\bot\}$, which is the formal way to say that conjectures or observations may settle previously undefined formulas at accessible worlds.

These semantic constraints explain the role of the modal axioms without appealing to bivalence. PT makes Axiom C sound, since $V_w(\varphi)=\top$ entails $V_{w'}(\varphi)=\top$ for every $w'$ with $w,R_i,w'$, and therefore $\varphi \to \Box_i\varphi$ is valid under the intended reading. MD and PT together imply that $R_i$ connects $w$ only to worlds that agree with $w$ on everything already defined, and possibly define more. If one also requires persistence of falsity, then $w \, R_i \, w'$ means agreement on the entire defined part of $w$, which is the strongest form of informational extension.

This reading aligns with the cognitive families we use. In doxastic worlds, $R_i$ collects the situations the agent takes as admissible extensions of the current belief state, so beliefs are evaluated for stability under growth of information. In epistemic worlds, $R_i$ collects the situations that remain indistinguishable for the agent given what is already known, and PT guarantees that what is known does not get lost when additional details are fixed. In both cases, accessibility is not a jump to an arbitrary alternative world, but it is a move to a world that preserves and refines the current informational core.

\end{document}